\theoremstyle{plain}
\newtheorem{theorem}{Theorem}[section]
\newtheorem{lemma}[theorem]{Lemma}
\newtheorem{corollary}[theorem]{Corollary}
\newtheorem{proposition}[theorem]{Proposition}
\newtheorem{observation}[theorem]{Observation}
\theoremstyle{definition}
\newtheorem{definition}[theorem]{Definition}
\newcommand{\Sp}{\text{Sp}}
\newcommand{\np}{\textsf{NP}}
\newcommand{\cnot}{\text{CNOT}}
\let\phi\varphi
\DeclarePairedDelimiter{\abs}{|}{|}
\DeclarePairedDelimiterX{\ketbra}[2]{|}{|}{%
	#1\rangle\!\langle#2%
}
\DeclareMathOperator*{\ex}{\mathbb{E}}
\DeclareMathOperator*{\pr}{\mathbb{P}}
\DeclareMathOperator{\trc}{Tr}
\DeclareMathOperator{\poly}{poly}
\newcommand{\equal}{=}
\begin{document}
	
	\title{Clifford Circuits can be Properly PAC Learned if and only if \texorpdfstring{$\textsf{RP} = \textsf{NP}$}{RP = NP}}
	
	\author{Daniel Liang}
	\affiliation{Department of Computer Science \\ University of Texas at Austin}
	\orcid{0000-0002-7418-0468}
	\email{dliang@cs.utexas.edu}
	%\thanks{You can use the \texttt{\textbackslash{}email}, \texttt{\textbackslash{}homepage}, and \texttt{\textbackslash{}thanks} commands to add additional information for the preceding \texttt{\textbackslash{}author}. If applicable, this can also be used to indicate that a work has previously been published in conference proceedings.}
	\maketitle
	
	\begin{abstract}
		Given a dataset of input states, measurements, and probabilities, is it possible to efficiently predict the measurement probabilities associated with a quantum circuit? Recent work of \citet{2020Caro} studied the problem of PAC learning quantum circuits in an information theoretic sense, leaving open questions of computational efficiency. In particular, one candidate class of circuits for which an efficient learner might have been possible was that of Clifford circuits, since the corresponding set of states generated by such circuits, called stabilizer states, are known to be efficiently PAC learnable \citep{rocchetto2018stabiliser}. Here we provide a negative result, showing that proper learning of CNOT circuits with $1/\poly(n)$ error is hard for classical learners unless $\textsf{RP} = \np$, ruling out the possibility of strong learners under standard complexity theoretic assumptions. As the classical analogue and subset of Clifford circuits, this naturally leads to a hardness result for Clifford circuits as well. Additionally, we show that if $\textsf{RP} = \np$ then there would exist efficient proper learning algorithms for CNOT and Clifford circuits. By similar arguments, we also find that an efficient proper quantum learner for such circuits exists if and only if $\np \subseteq \textsf{RQP}$. We leave open the problem of hardness for improper learning or $\mathcal{O}(1)$ error to future work.
	\end{abstract}
	
	\section{Introduction}
	The goal of efficient learning of quantum states and the circuits that act on them, is to be able to predict the outcome of various measurements with some degree of accuracy. For example, given a quantum state $\rho$ and a two-outcome measurement $M$ can we predict the probability that the measurement accepts?
	
	Naively, one can try and learn everything there is to know about the system, a technique known as tomography with versions for quantum states \citep{10.1145/2897518.2897544, 10.1145/3055399.3055454, Haah_2017} and quantum processes \citep{doi:10.1080/09500349708231894, PhysRevLett.90.193601}. However, this requires exponential time in the number of qubits due to information theoretic reasons related to the exponential dimension of the system. This exponential bound remains when trying to find a state close in trace distance as shown by a combination of \citet{Flammia_2012} and Holevo's bound. To address this, one can choose to restrict the type of information one wanted to learn, which led to the ideas of shadow tomography~\citep{10.1145/3188745.3188802} and classical shadows~\citep{Huang2020}. By only needing to predict the value of $M$ observables $\{O_i\}$, one is able to use only a number of measurements that is polynomial in the number of qubits and polylogarithmic in $M$. In a similar vein, \citet{aaronson2007learnability} proposed the idea of PAC learning quantum states, which is the idea of learning relative to some distribution over measurements, but only being given samples from that distribution as well (see \cref{ssec:pac} for details). Aaronson was then able to give a generalization theorem for this problem, showing that if one could find a hypothesis $h$ that had small training error on $\mathcal{O}(n)$ samples that $h$ would also perform well on future samples. However, the problem of efficiently finding $h$ was left open.
	
	An alternative direction was to restrict the class of objects being learned on, but allow one to choose what kind of measurements are taken.
	\citet{montanaro2017learning} was able to learn stabilizer states and later \citet{PhysRevA.80.052314} learned an unknown Clifford circuit.
	\citet{Lai2022} built on these results in the case of actually recovering the circuit, as well as limited learning in the presence of a small amount of non-Clifford gates.
	Stabilizer states and Clifford circuits are of particular interest to the quantum information community because many quantum communication protocols and well known quantum query algorithms \citep{PhysRevLett.70.1895, PhysRevLett.69.2881, QKD, doi:10.1137/S0097539796298637, doi:10.1137/S0097539796300921, quant-ph/9705052} utilize these states and circuits.
	Gottesman and Knill \citep{https://doi.org/10.48550/arxiv.quant-ph/9807006} (with improvements by \citet{aaronson2004improved}) were able to give an efficient classical simulation of these objects, showing these class of objects to seemingly be much simpler than the set of all quantum states or circuits.
	Combined with the fact that stabilizer states are good approximations to Haar random states \citep{PhysRevA.96.062336, https://doi.org/10.48550/arxiv.1510.02767}, we get a set of circuits and states that are highly quantum with many interesting uses, but still have enough exploitable structure to be classically simulable, making them a prime candidate for learning.
	
	\citet{rocchetto2018stabiliser} was able to combine the ideas of PAC learning with the structure provided by restricting to stabilizer states to give an efficient PAC algorithm for learning stabilizer states. \citet{2020Caro} extended the ideas of PAC learning to quantum circuits, giving an analogous generalization theorem to \citet{aaronson2007learnability}. As with \citet{aaronson2007learnability}, the problem of efficiently finding such a good hypothesis was left open. A natural follow-up was whether or not Clifford circuits could be efficiently PAC learned in an analogous way to stabilizer states. Here, we are given inputs of the form $\mleft(\rho, \frac{I^{\otimes n}+P}{2}\mright)$ for some stabilizer state $\rho$ and Pauli matrix $P$, with labels $\trc\mleft[\frac{I^{\otimes n}+P}{2}C\rho C^\dagger\mright]$ corresponding to an unknown Clifford circuit $C$ and asked to predict future labels.
	It is worth noting that we have slightly altered the definition of PAC learning a quantum circuit from that of \citet{2020Caro} to a setting we find more comparable to Aaronson's original PAC learning result for quantum states \citep{aaronson2007learnability}. In the setting introduced by Caro and Datta, the measurements were limited to being rank 1 projectors with product structure, rather than the rank $2^{n-1}$ projectors we use in our proof.
	
	When one attempts to create a PAC learning algorithm a natural first step is to try an elimination method, i.e., eliminating options that don't match the given training data and then outputting some option that does match the data well.
	Such algorithms are known as \textit{proper} learning algorithms (see \cref{ssec:pac} for more details) and were the only kind of learning algorithms considered when the idea of PAC learning was first introduced by \citet{valiant1984theory}. And while the learning theory community now considers things like improper learning algorithms, the original proper learning algorithms generally remain the most natural class of learning algorithms to consider first. We note for instance that \citet{rocchetto2018stabiliser} is a proper learning algorithm, as well as learning algorithms for parities and other well known learning problems \citep{Klivans2005}
	
	To that extent, we show in this paper that an efficient proper learner for Clifford circuits that achieves $1/\poly(n)$ error exists if and only if $\textsf{RP} = \np$, effectively ruling out ``straight-forward'' learning algorithms for Clifford circuits.
	More generally, these results apply to any proper learner that achieves arbitrary error $(\epsilon, \delta)$ with runtime $\poly(n, \epsilon^{-1}, \delta^{-1})$, which is known as a \textit{strong} learner.
	Furthermore, this is true even just for a learner of a subset of Clifford circuits called CNOT circuits.
	This subset essentially restricts to the set of Clifford circuits that map computational basis states to other computational basis states and these circuits are highly related to the complexity class $\textsf{L}$ (see \cref{ssec:cnot_circuit}). 
	We leave open the problem of showing $\mathcal{O}(1)$ hardness for proper learners using complexity theoretic means, such as in \citet{venkatesan_2009}.

	One can also imagine that the learning algorithm has access to a quantum computer. Since there exists problems like factoring \citep{doi:10.1137/S0036144598347011} for which we have an efficient quantum algorithm but not an efficient classical algorithm, this learner may be able to efficiently learn more expressive concept classes. We also give results for this setting by relating $\np$ to $\textsf{RQP}$, the quantum analogue of $\textsf{RP}$. We now informally state our main theorems regarding CNOT and Clifford circuits.
	
	\begin{theorem}\label{thm:main}
		There exists an efficient randomized proper PAC learner for CNOT circuits if and only if $\textsf{RP} = \np$. Furthermore, an efficient quantum proper PAC learner for CNOT circuits exists if and only if $\np \subseteq \textsf{RQP}$.
	\end{theorem}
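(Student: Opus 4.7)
The plan is to reduce both directions of \cref{thm:main} to a single structural fact: that the \emph{consistency problem} for CNOT circuits is \np-complete. Write $\text{Cons}(\cnot)$ for the search problem in which, given a labeled sample $S = \{(\rho_i, P_i, p_i)\}_i$ with $\rho_i$ a stabilizer state and $P_i$ a Pauli, one must decide whether some CNOT circuit $C$ satisfies $\trc[\tfrac{I^{\otimes n}+P_i}{2} C \rho_i C^\dagger] = p_i$ for every $i$, and output one if so. The strategy is to (i) show $\text{Cons}(\cnot) \in \np$, (ii) show it is \np-hard, and then (iii) invoke the standard Pitt--Valiant style equivalence between proper PAC learnability and randomized consistency.

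The easy half of the reduction is membership in \np: a CNOT circuit on $n$ qubits is specified by an invertible matrix $A \in GL_n(\mathbb{F}_2)$, and for any stabilizer $\rho_i$ and Pauli $P_i$ the label $\trc[\tfrac{I^{\otimes n}+P_i}{2} C_A \rho_i C_A^\dagger]$ can be computed in $\poly(n)$ time by Gottesman--Knill style stabilizer bookkeeping. For \np-hardness I would reduce from a combinatorial problem over $\mathbb{F}_2$ into a sample whose labels simultaneously pin down the row-span or conjugation action of $A$ on chosen Pauli generators. The natural target is a problem like $3$SAT or a constrained-invertible-matrix problem, where each sample $(\rho_i, P_i, p_i)$ is a gadget forcing a specific linear relation among the entries of $A$ (or $A^{-\top}$), while the global invertibility of $A$ is what forces a genuine satisfying assignment to survive. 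Each clause of the $3$SAT instance becomes a small set of stabilizer/Pauli pairs whose joint satisfiability by some $A \in GL_n(\mathbb{F}_2)$ is equivalent to that clause being satisfiable.

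Once \np-completeness of $\text{Cons}(\cnot)$ is established, both directions follow cleanly. Forward: an efficient randomized proper PAC learner, run with the uniform distribution over the constructed sample and with failure probability driven down by standard amplification, produces (with high probability) a CNOT hypothesis consistent with $S$ whenever one exists, giving an $\textsf{RP}$ algorithm for an \np-hard problem and hence $\textsf{RP} = \np$. Reverse: if $\textsf{RP} = \np$ then $\text{Cons}(\cnot) \in \textsf{RP}$, and plugging an $\textsf{RP}$ consistency oracle into empirical risk minimization with the polynomial sample-complexity bound of \citet{2020Caro} (which applies because CNOT circuits admit a $\poly(n)$-bit description and thus polynomial pseudodimension) yields an efficient randomized proper PAC learner. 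The quantum statement is identical under $\textsf{RP} \leftrightarrow \textsf{RQP}$; the only change is that the consistency oracle now lives in $\textsf{RQP}$, and an $\textsf{RQP}$ algorithm for an \np-hard problem is by definition $\np \subseteq \textsf{RQP}$.

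The main obstacle is the \np-hardness reduction itself. The subtlety is that dropping invertibility collapses the problem to ordinary linear algebra over $\mathbb{F}_2$, which is solvable in polynomial time, so the hardness must be \emph{routed through} the $GL_n(\mathbb{F}_2)$ constraint rather than being encoded only in the linear equations that the sample imposes. Thus the reduction must carefully design stabilizer/Pauli gadgets so that every consistent linear map is automatically invertible \emph{and} encodes a solution to the source combinatorial problem; getting both of these simultaneously from a single sample is the technical heart of the proof.
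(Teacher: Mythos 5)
Your high-level architecture matches the paper's: reduce both directions of \cref{thm:main} to $\np$-completeness of the consistency problem for CNOT circuits, then invoke a Pitt--Valiant style equivalence (forward: run the learner on the uniform distribution over $S$ and amplify; reverse: a search-to-decision reduction for the $\np$-complete consistency problem, plugged into the generalization bound of \citet{2020Caro} with the size/depth bounds of \cref{lemma:clifford_circuit_size}). You also correctly locate the crux: hardness must be routed through the invertibility constraint on $\Theta \in \mathbb{F}_2^{n\times n}$, since without it the sample constraints are just linear algebra over $\mathbb{F}_2$.

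Where your proposal has a genuine gap is exactly where you flag the ``main obstacle'': the $\np$-hardness reduction is never actually produced, and your sketch of it is not workable as stated. The paper's route is to reduce from the \textsc{NonSingularity} problem of Buss, Frandsen, and Shallit~\citep{BUSS1999572} --- given $M_0, \dots, M_k$ over $\mathbb{F}_2$, decide whether $M_0 + \Span(\{M_i\})$ contains an invertible matrix --- which is the ``constrained-invertible-matrix problem'' you gesture at. But the step you are missing is a \emph{realizability} argument: not every affine subspace of $\mathbb{F}_2^{n\times n}$ can be encoded as a set of stabilizer/Pauli samples. The gadgets the paper builds in \cref{lemma:create_sol_basic} and \cref{cor:create_sol} can only impose one-dimensional affine constraints on disjoint column blocks of $\Theta$, and only when the direction matrices satisfy nondegeneracy conditions ($v_j \neq w_j$, $v_j, w_j \neq 0$). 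The key technical lemma you would still need is the paper's \cref{cor:disjoint_column}: one must trace through the specific 3SAT-to-\textsc{NonSingularity} reduction (the weighted $s$--$t$ graph construction and its adjacency matrix) to verify that the resulting $M_i$ act on disjoint columns, that no edge has weight $x_i + 1$, and that the self-loops give every column of $M_0$ a nonzero entry --- precisely the conditions the sample gadgets require. Without identifying this intermediate problem and checking that its hard instances are sample-realizable, the claim that ``each clause... becomes a small set of stabilizer/Pauli pairs'' remains a hope rather than a construction.
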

	
	\begin{corollary}\label{cor:main}
		There exists an efficient randomized proper PAC learner for Clifford circuits if and only if $\textsf{RP} = \np$. Furthermore, an efficient quantum proper PAC learner for Clifford circuits exists if and only if $\np \subseteq \textsf{RQP}$.
	\end{corollary}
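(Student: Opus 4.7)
The plan is to derive both directions from Theorem~\ref{thm:main} by exploiting that CNOT circuits form a subclass of Clifford circuits sharing the same labelled-sample format $\bigl(\rho,(I^{\otimes n}+P)/2\bigr)$.

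For the hardness (``only if'') direction, I would feed the hard labelled distribution constructed in the proof of Theorem~\ref{thm:main} (whose hidden target is a CNOT circuit, hence also a Clifford) to a hypothetical efficient proper Clifford learner. With high probability it returns a Clifford hypothesis $C$ agreeing with the CNOT target on all but an $\epsilon$ fraction of examples. The key step is to verify that the reduction underlying Theorem~\ref{thm:main} is hypothesis-class-agnostic: any approximate predictor, whether CNOT or Clifford, should still suffice to decode the answer of the underlying $\np$-complete instance (e.g.\ by forcing agreement on a designated test example encoding the witness, whose probability shift lies outside the PAC error tolerance). If so, this immediately yields $\textsf{RP}=\np$, and the same argument with a quantum learner yields $\np\subseteq\textsf{RQP}$.

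For the easiness (``if'') direction, assuming $\textsf{RP}=\np$, I would search in $\np$ for a Clifford tableau consistent with a polynomial-size training sample. This rests on three facts: (i) each $n$-qubit Clifford has a $\poly(n)$-bit tableau description; (ii) given such a description together with a stabilizer input $\rho$ and a Pauli observable $P$, the Gottesman--Knill formalism evaluates $\trc\bigl[\frac{I^{\otimes n}+P}{2}\,C\rho C\conj\bigr]$ in polynomial time, so ``consistency with a given sample'' is an $\np$ predicate; (iii) a Caro--Datta-style uniform convergence bound applied to the polynomially-sized Clifford class turns approximate empirical consistency into a PAC guarantee. Under $\textsf{RP}=\np$ the $\np$ search is solved by an efficient randomized algorithm, so the entire pipeline is an efficient randomized proper PAC learner. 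The quantum case is identical, replacing $\textsf{RP}$ by $\textsf{RQP}$ under $\np\subseteq\textsf{RQP}$.

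The main obstacle I anticipate is the hardness direction: making sure the reduction of Theorem~\ref{thm:main} is robust enough that a Clifford predictor (rather than a CNOT one) already suffices to reveal the $\np$ witness. If the base reduction is not manifestly class-agnostic, a natural remedy is to post-process the Clifford tableau via Gottesman--Knill: check whether $C$ maps each computational-basis stabilizer to a computational-basis stabilizer (up to Paulis), and if so extract the associated $\mathbb{F}_2$-linear map, effectively rounding $C$ to an honest CNOT hypothesis before directly invoking Theorem~\ref{thm:main}.
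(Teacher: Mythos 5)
Your proposal matches the paper's approach: derive the Clifford corollary from the CNOT theorem by noting CNOT circuits are a subclass of Clifford circuits, using the same sample format, the same $\np$-completeness-of-consistency machinery, and the same generalization bound. The ``main obstacle'' you flag (whether a consistent Clifford hypothesis could exist even when no consistent CNOT does) is real, and the paper handles it with a terse inclusion claim $\textsc{CNOTDecide} \subset \textsc{CliffordDecide}$ that deserves the justification you gesture at: the hard samples produced in \cref{lemma:cnot_np_hard} pin every $C^\dagger Z_j C$ into $\{\pm 1\}\times\{I,Z\}^{\otimes n}$, forcing $\gamma_{ij}=0$, so any consistent Clifford already behaves on those samples as a CNOT circuit and its $\Theta$ block is a valid \textsc{NonSingularity} witness; your ``rounding'' step is therefore sound but not strictly necessary. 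One detail you omit that the paper explicitly adjusts: for Clifford circuits the labels live in $\{0,\frac{1}{2},1\}$ by \cref{lemma:pauli_measure_trace}, so the minimum nonzero per-sample error $\alpha$ in \cref{prop:decision2learning} drops from $1$ to $\frac{1}{2}$, which rescales the $\epsilon$ threshold in the ``only if'' direction but changes nothing qualitatively.
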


	The proofs of these main results starts by realizing that finding a CNOT circuit with zero training error requires finding a full rank matrix in an affine subspace of matrices under matrix addition (so as to differentiate from a coset of a matrix group using matrix multiplication). This is known as the \textsc{NonSingularity} problem \citep{BUSS1999572} and is \np-complete. While this may seem like a backwards reduction, it turns out that the set of matrix affine subspaces used to show that \textsc{NonSingularity} can solve 3SAT are a subset of the ones needed to learn CNOT circuits with zero training error. Thus, there exist a set of samples such that a CNOT circuit with zero training error exists if and only if the SAT instance is satisfiable. Finding such a CNOT circuit is what is known as the search version of the \textit{consistency problem} and in turn the decision version of the consistency problem is also \np-complete.
	
	To show that an efficient proper learner for CNOT circuits implies $\textsf{RP} = \np$, we follow the same proof structure as similar results for \np-hardness of the consistency problem for 2-clause CNF, 3-DNF, or the intersection of two halfspaces \citep{blum2015dp, BLUM1992117, Haghtalab2020}. First, let $S$ be some sample from the decision version of the consistency problem for CNOT circuits. Using the uniform distribution over each element in $S$, we will sample every element of $S$ with high probability given enough queries. Since $S$ contains at most a polynomial number of samples, we are able to show that an efficient learner with arbitrary $1/\poly(n)$ error would necessarily also solve the consistency problem with high enough probability to create a solution in $\textsf{RP}$.
	
	Completing the proof in the other direction, if $\textsf{RP} = \np$ we utilize search-to-decision reductions for $\np$-complete problems to get an efficient algorithm for the search problem of minimizing training error. We can treat this search algorithm as our means of generating a hypothesis circuit $C$ with low training error. By the generalization theorem provided by \citet{2020Caro}, assuming we have enough samples, this $C$ will properly generalize and have low true error, thus completing the proof. The quantum forms of the proof essentially come for free by replacing $\textsf{RP}$ with $\textsf{RQP}$ everywhere and using learners capable of doing quantum computation.

	\subsection{Related Work}
	We also note that we are dealing with the problem of classically PAC learning a classical function (i.e., classical labels) derived from a quantum system. This is as opposed to quantum PAC learning of a classical function as in \citet{arunachalam2017guest, arunachalam2018optimal, arunachalam2020quantum, arunachalam2021private} where instead of a distribution over samples we receive access to copies of a quantum state. This state results in the same distribution classically when measured in the computational basis but can be measured in other basis to get different results. There is also the attempt to directly learn a quantum process with quantum labels, as in \citet{chung_et_al_2021, Caro_2021}. Here, they do not choose to measure the output state, and have samples of the form $(\rho, \mathcal{M}(\rho))$ for quantum process $\mathcal{M}$. Other related quantum learning works, some of which are outside the PAC model, include \citet{yoganathan2019condition, PhysRevA.80.052314, cheng2015learnability}.
	
	\section{Preliminaries}
	
	\subsection{Quantum States and Circuits}
	A quantum state $\rho$ on $n$ qubits is a $2^n \times 2^n$ PSD matrix with trace $1$. If the matrix is rank $1$ then we refer to $\rho$ being a \textsc{pure state}, since it can be decomposed as $\rho = \ketbra{\psi}{\psi}$ where $\ket{\psi}$ is a $2^n$-dimensional column vector with norm 1 and $\bra{\psi}$ is its complex conjugate. A two-outcome measurement $E$ is then a projector such that $E^2 = E$ such that the probability of a `1' outcome is $\trc\mleft[E \rho\mright]$ and the probability of a `0' outcome is $1-\trc\mleft[E \rho\mright]$, leaving the expectation value as simply $\trc\mleft[E \rho\mright]$.
	
	A quantum process is how one evolves a quantum state, and therefore it must preserve the trace $1$ and the PSD condition. We will be primarily interested in quantum circuits, which are the subset of quantum processes that map pure states only to other pure states. These are constrained to be unitary operations, such that after acting on $\rho$ with the circuit $C$, the state that we are left with is $C \rho C^\dagger$ where $C^\dagger$ is the complex conjugate of $C$.
	
	\subsection{Paulis and Stabilizer States/Groups}
	
	We will start by giving the following matrices, known as the \textsc{Pauli matrices}.
	
	\[
	I = \begin{pmatrix}1 & 0\\0 & 1\end{pmatrix}
	\,\,\,\,X = \begin{pmatrix}0 & 1\\1 & 0\end{pmatrix}
	\,\,\,\,Y = \begin{pmatrix}0 & -i\\i & 0\end{pmatrix}
	\,\,\,\,Z = \begin{pmatrix}1 & 0\\0 & -1\end{pmatrix}
	\]
	
	Noting that these are all unitaries that act on a single qubit, we can generalize to $n$ qubits.
	
	\begin{definition}
		Let $\mathcal{P}_n = \{\pm 1, \pm i\} \times \{I, X, Y, Z\}^{\otimes n}$ be the matrix group consisting all $n$-qubit Paulis with phase $\pm 1$ or $\pm i$.
	\end{definition}
	
	We'll also introduce some shorthand notation:
	
	\begin{definition}
		Let $X_i$ and $Z_i$ be the Pauli acting only on the $i$-th qubit with $X$ or $Z$ respectively and the identity matrix on all other qubits.
	\end{definition}
	
	\begin{definition}\label{def:power_z}
		For $v \in \{0, 1\}^n$, let $X^v = \prod_{i=1}^n X_i^{v_i}$ and $Z^v = \prod_{i=1}^n Z_i^{v_i}$. 
	\end{definition}
	
	Note that $Z^{v} \cdot Z^{w} = Z^{v + w}$, assuming the dimensions of $v$ and $w$ match. It is easy to see that $v \neq w$ also implies that $Z^v \neq Z^w$.

	A \textsc{stabilizer state} $\rho$ is any state that can be written as $\frac{1}{2^n}\sum_{g \in G} g$, where $G$ is an abelian subgroup $G \subset \mathcal{P}_n\setminus \{-I^{\otimes n}\}$ without the negative identity. $G$ is known as the \textsc{stabilizer group} of $\rho$. As it turns out, if $G$ is of order $2^n$ then $\rho$ will be a pure state. This leads to the alternative (and more popular definition) where $\rho = \ketbra{\psi}{\psi}$, is the unique state that is stabilized by $G$. That is, for all $g \in G$, $g \ket{\psi} = \ket{\psi}$. This definition shows why $-I^{\otimes n}$ isn't allowed to be in $G$, since $-I^{\otimes n}$ stabilizes nothing. It also shows why one must restrict the entries of $G$ to only have real phase.
	
	\begin{proposition}
		Any abelian subgroup of $G \subseteq \mathcal{P}_n\setminus \{-I^{\otimes n}\}$ cannot contain any Paulis with an imaginary phase.
	\end{proposition}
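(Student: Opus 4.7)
The plan is to exploit closure of $G$ under multiplication together with the elementary fact that every element of $\{I, X, Y, Z\}$ squares to the identity. Note that the abelian hypothesis is actually not needed; closure alone suffices.

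First, I would take an arbitrary element $g \in G$ with imaginary phase and write it as $g = \alpha\, P_1 \otimes \cdots \otimes P_n$, where $\alpha \in \{+i, -i\}$ and each $P_j \in \{I, X, Y, Z\}$. Next I would compute $g^2$. Since the four single-qubit Pauli matrices all satisfy $P_j^2 = I$, one gets $(P_1 \otimes \cdots \otimes P_n)^2 = I^{\otimes n}$, and hence
\[
g^2 \;=\; \alpha^2\, (P_1 \otimes \cdots \otimes P_n)^2 \;=\; (\pm i)^2\, I^{\otimes n} \;=\; -I^{\otimes n}.
\]

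Since $G$ is a subgroup it is closed under multiplication, so $g^2 \in G$. But this places $-I^{\otimes n} \in G$, contradicting the standing assumption that $G \subseteq \mathcal{P}_n \setminus \{-I^{\otimes n}\}$. Therefore no such $g$ exists, and every element of $G$ has phase in $\{+1, -1\}$.

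There is no real obstacle here; the only thing to be careful about is the (trivial) verification that tensor products of single-qubit Paulis square to $I^{\otimes n}$ and that $(\pm i)^2 = -1$. The argument did not use commutativity, which matches the intuition that the "no $-I^{\otimes n}$" restriction in the definition of a stabilizer group is precisely what rules out imaginary phases.
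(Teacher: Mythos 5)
Your proof is correct and takes essentially the same approach as the paper's: square the element with imaginary phase to get $-I^{\otimes n}$ and invoke closure to reach a contradiction. Your additional observation that commutativity is never used is accurate and a nice clarification, but the underlying argument is identical.
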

	\begin{proof}
		Given a Pauli with an imaginary phase, it's square would be equal to $-I^{\otimes n}$, making the group not closed. This is a contradiction.
	\end{proof}
	
	One of the reasons stabilizer states are so important is this bijection between the stabilizer group of a stabilizer state and the state itself; by simply knowing the generators of the group, one can easily reconstruct the state. And since there are at most $n$ generators, if one can efficiently write down the generators themselves then there is a polynomial size representation of a stabilizer state. We now show how one can write down any member of a stabilizer group as follows. Given, $P \in \mathcal{P}_n$ with real phase such that $P = \pm \bigotimes_i P_i$, define a function $N: \mathcal{P} \rightarrow \{0, 1\}^2$ for each qubit $N(I) = 00$, $N(X) = 10$, $N(Z) =11$, and concatenate to make $N(P) = (N(P_1), N(P_2), \dots, N(P_n))$. Additionally, have an extra bit for the sign for whether the sign is $-1$ or $1$. This results in a $2n+1$ bit string for each generator, so writing down a stabilizer state requires only $\mathcal{O}\mleft(n^2\mright)$ bits to write down classically.

	\subsection{Clifford Circuits}

	Informally, a Clifford circuit maps stabilizer states to other stabilizer states.
	
	\begin{definition}
		A \textsc{Clifford circuit} is a unitary $U$ such that $U\mathcal{P}_nU^\dagger = \mathcal{P}_n$, while ignoring global phase on the unitary. More formally, consider the normalizer $\mathcal{N}(\mathcal{P}_n) = \{U \in U(2^n) \mid U\mathcal{P}_n U^\dagger = \mathcal{P}_n\}$, and let $\mathcal{C}_n = \mathcal{N}(\mathcal{P}_n)/U(1)$ be the \textsc{Clifford group}.
	\end{definition}
	
	Like stabilizer states, generators are an important part of how we deal with Clifford circuits. How a given Clifford circuit $U$ acts on the generators of the Pauli matrices completely characterizes the unitary $U$ \citep{PhysRevA.80.052314}. To borrow the notation of \citet{2014Koenig}, this relationship can be efficiently described via:
	
	\begin{align}\label{eq:clifford_gen}
		UX_j U^\dagger = (-1)^{p_j}\prod_{i=1}^n X_i^{\alpha_{ij}} Z_i^{\beta_{ij}}
		\hspace{2em}
		UZ_j U^\dagger = (-1)^{q_j}\prod_{i=1}^n X_i^{\gamma_{ij}} Z_i^{\theta_{ij}}
	\end{align} 
	where $p_i$, $q_i$, $\alpha_{ij}$, $\beta_{ij}$, $\gamma_{ij}$, and $\theta_{ij}$ are all $\{0, 1\}$ values.
	It will sometimes be useful to view $\alpha_{ij}$, $\beta_{ij}$, $\gamma_{ij}$, and $\theta_{ij}$ as the $n \times n$ boolean matrices $A$, $B$, $\Gamma$, and $\Theta$ respectively. This gives us a simple upper-bound on the number of Clifford circuits.
	
	\begin{proposition}\label{prop:clifford_circuit_number}
		There are at most $2^{\mathcal{O}(n^2)}$ Clifford circuits.
	\end{proposition}
	\begin{proof}
		The total number of bits we use to represent $p$, $q$, $A$, $B$, $\Gamma$, and $\Delta$ is $4n^2 + 2n = \mathcal{O}(n^2)$. There can then be at most $2^{\mathcal{O}(n^2)}$ Clifford circuits.
	\end{proof}
	
	\sloppy
	However, because commutation relations are preserved, not all possible values of $\alpha, \beta, \gamma, \theta$ are allowed (the $p$ and $q$ values can be arbitrary).
	This leads us to the idea of symplectic matrices.
	We note that a Clifford circuit can be encoded as a $(2n + 1) \times 2n$ boolean matrix $S$ where column $2j-1$ is equal to $(\alpha_{1j}, \beta_{1j}, \cdots, \alpha_{nj}, \beta_{nj}, p_j)$ and column $2j$ is equal to $(\gamma_{1j}, \theta_{1j},\cdots, \gamma_{nj}, \theta_{nj}, q_j)$.
	\fussy
	We will call this the \textit{full encoding} of the Clifford circuit.

	\begin{definition}
		A symplectic matrix over $\mathbb{F}_2^{2n}$ is a $2n \times 2n$ matrix $S$ with entries in $\mathbb{F}_2$ such that
		\begin{align}\label{eq:symplectic_matrix}
			S^T \Lambda(n)S = \Lambda(n) \equiv \bigoplus_{i=1}^n \begin{pmatrix} 0 & 1 \\ 1 & 0 \end{pmatrix}.
		\end{align} These matrices form the symplectic group $\Sp(2n, \mathbb{F}_2)$.
	\end{definition}

	The symplectic matrices preserve the symplectic inner product $\omega(v, w) = v^T \Lambda(n) w$ on $\mathbb{F}_2^{2n}$. It turns out that if we consider the submatrix defined by the first $2n$ rows of our full encoding $S$, a necessary and sufficient condition to preserve the commutation relations of the generators is for this submatrix to be symplectic, as $\{X_i\} \cup \{Z_i\}$ form what is known a symplectic basis.
	Formally, $\mathcal{C}_n / \mathcal{P}_n \cong \Sp(2n, \mathbb{F}_2)$.%, where the Pauli in the divisor determines the $p$ and $q$ values.
	
	\subsection{CNOT circuits and \texorpdfstring{$\oplus\textsf{L}$}{\ensuremath{⊕}L}}\label{ssec:cnot_circuit}
	
	It is a well known fact that every Clifford circuit can be generated using only $H$, $P$, and CNOT gates as defined below:
	
	\begin{align*}
		H = \frac{1}{\sqrt{2}}\begin{pmatrix}1 & 1\\1 & -1\end{pmatrix}
		\,\,\,\,\,\, P = \begin{pmatrix}1 & 0\\0 & i\end{pmatrix}
		\,\,\,\,\,\, \cnot = \begin{pmatrix}1 & 0 & 0 & 0\\0 & 1 & 0 & 0\\0 & 0 & 0 & 1\\0 & 0 & 1 & 0\end{pmatrix}
	\end{align*}
	
	We note that $X = HP^2H$. If we restrict to the subset of circuits that are generated by only $X$ and CNOT, we get what are known as CNOT circuits \citep{aaronson2004improved}, which are a clear subset of Clifford circuits.
	
	\begin{definition}[\citet{aaronson2004improved}]
		The complexity class $\oplus\textsf{L}$ is the class of problems that reduce to simulating a polynomial-size CNOT circuit.
	\end{definition}
	
	\noindent A perhaps more familiar definition for complexity theorists is the class of problems that are solvable by a nondeterministic logarithmic-space Turing machine that accepts if and only if the total number of accepting paths is odd.
	
	Let us now consider the set of all Clifford circuit that map computational basis states to other computational basis states, thereby stabilizing the subgroup $\{\pm 1\} \times \{I, Z\}^{\otimes n}$. Very briefly, we will call these \textit{classical Clifford circuits} as we will now prove that they are largely equivalent to CNOT circuits. The following lemmas will be useful.
	
	\begin{lemma}\label{lemma:theta_full_rank}
		Let $\Theta$ be the matrix form of the $\theta_{ij}$ from \cref{eq:clifford_gen}. Any CNOT circuit $C$ must have $\Theta$ be full rank.
	\end{lemma}
	\begin{proof}
		Let us first consider what happens to a computational basis state when acted upon by $C$ and let $S$ be the full encoding of $C$. Referencing \cref{eq:clifford_gen}, the $\gamma_{ij}$ must be $0$ for all $i$ and $j$. Since every member of $\Sp(2n, \mathbb{F}_2)$ is full rank, the even columns of $S$ must be as well. Since the $\gamma$ terms are all zero, the even columns of $S$ are full rank if and only if $\Theta$ is full rank.
	\end{proof}

	\begin{lemma}\label{lemma:theta_rowsum}
		Let $\Theta$ be the matrix form of the $\theta_{ij}$ from \cref{eq:clifford_gen} for some Clifford circuit $C$. If $\Theta$ is full rank then there exists a CNOT circuit with the same $\Theta$.
	\end{lemma}
	\begin{proof}
		One can verify that the $\Theta$ matrix of the circuit that does nothing, which is a valid CNOT circuit as well, is the identity matrix.
		We note that a CNOT from qubit $i$ to qubit $j$ performs the rowsum operation of adding row $j$ to row $i$ of $\Theta$. Thus it is possible to efficiently construct a circuit with matching $\Theta$ using rowsum operations via CNOT gates.
	\end{proof}
	
	We can now prove our desired goal leveraging these two lemmas.
	
	\begin{proposition}\label{prop:sim_cnot}
		Let $C$ be an arbitrary classical Clifford circuit. It can be efficiently generated using solely $X$, $Z$, and CNOT gates. Moreover, it's effect on the computational basis states can be entirely simulated using only $X$ and CNOT.
	\end{proposition}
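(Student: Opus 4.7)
The plan is to translate the classical-Clifford hypothesis into a statement about the symplectic matrix of $C$ and then decompose that matrix by Gaussian elimination over $\mathbb{F}_2$. First, the hypothesis $CGC^\dagger=G$ for $G=\{\pm 1\}\times\{I,Z\}^{\otimes n}$ forces $CZ_jC^\dagger=(-1)^{q_j}Z^{d_j}$ for every $j$, so the coefficients $\gamma_{ij}$ in \Cref{eq:clifford_gen} all vanish. Dually, because $C$ maps each computational basis state to (at most a linearly-signed) basis state, comparing both sides of $CX_jC^\dagger\cdot C|x\rangle=C|x\oplus e_j\rangle$ and requiring the result to be a single fixed Pauli yields $CX_jC^\dagger=(-1)^{p_j}X^{a_j}$, so the $\beta_{ij}$ vanish as well. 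The symplectic identity $S^T\Lambda(n)S=\Lambda(n)$ then collapses to $A^TD=I$, where $A=(a_1\mid\cdots\mid a_n)$ and $D=(d_1\mid\cdots\mid d_n)$; in particular both matrices lie in $GL(n,\mathbb{F}_2)$ and $D=(A^{-1})^T$.

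Next, I will realize the linear part via CNOTs. A direct check shows that the $A$-block of $\cnot_{ij}$ is an elementary add-row matrix over $\mathbb{F}_2$ of the form $I+e_je_i^T$, and Gaussian elimination writes any $A\in GL(n,\mathbb{F}_2)$ as a product of $O(n^2)$ such elementaries. This produces an efficient CNOT circuit $C_\cnot$ whose $A$-block equals $A$ (and whose $D$-block is therefore automatically $(A^{-1})^T$), so the Clifford $C\,C_\cnot^\dagger$ has trivial symplectic data and carries only sign information. Its action on basis states must therefore be $|x\rangle\mapsto(-1)^{s\cdot x+t}|x\oplus b\rangle$ for some $s,b\in\mathbb{F}_2^n$ and $t\in\mathbb{F}_2$, which is exactly the Pauli $(-1)^tZ^sX^b$. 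Reassembling yields $C=Z^sX^bC_\cnot$ (up to a global phase), an efficient $X$/$Z$/CNOT circuit.

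For the second claim, observe that $Z^s$ only contributes a $\pm 1$ phase to each computational basis state and therefore does not affect measurement outcomes in the standard basis, nor the bit string propagated by a classical simulator. Dropping it leaves $X^bC_\cnot$, a circuit of only $X$'s and CNOTs that realizes the affine permutation $|x\rangle\mapsto|Ax\oplus b\rangle$, which is the full effect of $C$ on basis states up to phase. I expect the delicate step to be the first one: showing $CX_jC^\dagger$ has no $Z$-content rules out quadratic sign patterns like those a dressing by $CZ$ would introduce, and is precisely what makes the $X$/$Z$/CNOT fragment already large enough. Once that rigidity is established, the remaining decomposition is a mechanical Gauss--Jordan argument.
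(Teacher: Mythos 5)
Your route parallels the paper's in structure: extract the symplectic blocks $A,B,\Gamma,\Theta$, realize the invertible block $\Theta$ by Gaussian elimination with CNOTs, and absorb the signs into single-qubit $X$ and $Z$ gates. The gap is in the step you yourself flag as delicate, namely $B=0$. You derive it from the premise that $C$ maps each basis state to an ``at most linearly-signed'' basis state, but affine phase is not part of the hypothesis --- it is precisely what you would need to prove, and it fails. Take $C=\mathrm{CZ}$ on two qubits: $\mathrm{CZ}|x\rangle=(-1)^{x_1x_2}|x\rangle$, so $\mathrm{CZ}$ maps computational basis states to computational basis states and conjugates $\{\pm1\}\times\{I,Z\}^{\otimes 2}$ to itself, hence is a classical Clifford circuit by the paper's definition; yet $\mathrm{CZ}\,X_1\,\mathrm{CZ}=X_1Z_2$, so $B=\left(\begin{smallmatrix}0&1\\1&0\end{smallmatrix}\right)\neq0$, and no word in $X$, $Z$, and CNOT --- each of which acts on basis states with at most an affine sign --- can reproduce the quadratic phase $(-1)^{x_1x_2}$.

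You should also know that the paper's own proof of the ``full result'' clause breaks at the same point in a different guise: it asserts $A^TB=0$ as a consequence of $S^T\Lambda(n)S=\Lambda(n)$ with $\Gamma=0$, but the $X$-$X$ block of that condition only yields $A^TB+B^TA=0$, i.e.\ $A^TB$ symmetric, which the $\mathrm{CZ}$ example satisfies with $B\neq 0$. So the first clause of \cref{prop:sim_cnot} is false as stated. What the rest of the paper actually relies on (\cref{sec:gen_constraint}, \cref{lemma:cnot_np_hard}) is only the ``moreover'' clause and its converse, and those survive: the paper proves the ``moreover'' clause directly by tracking $\theta_{ij}$ and $q_j$ alone, i.e.\ the image under $C$ of the diagonal stabilizer group $\{(-1)^{x_i}Z_i\}$ of $|x\rangle$, which never touches $B$ or $p$. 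Your argument for the ``moreover'' clause instead routes through the decomposition $C=Z^sX^bC_{\mathrm{CNOT}}$, which presupposes $B=0$ and so inherits the gap; arguing as the paper does --- that $C$ and $C_{\mathrm{CNOT}}$ act identically on the $Z$-type stabilizers and hence send $|x\rangle$ to the same basis state up to phase --- repairs that part of your proof without any claim about $B$.
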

	\begin{proof}
		Let us first consider what happens to a computational basis state when acted upon by $C$. Referencing \cref{eq:clifford_gen}, the $\gamma_{ij}$ must be $0$ for all $i$ and $j$, and we will essentially ignore $\alpha_{ij}$, $\beta_{ij}$, and $p_j$ for now leaving us with $\theta_{ij}$ and $q_{j}$. By \cref{lemma:theta_full_rank}, $\Theta$ must full rank. By \cref{lemma:theta_rowsum}, there exists a CNOT matrix that achieves the same $\Theta$ as well. To get a matching $q_j$, one can simply apply an X gate at the beginning of each qubit that has $q_j = 1$, since $XZX = -Z$, and the following CNOT gates will not itself introduce any negative phases. From here, we have already proved the \textit{moreover} statement.
		
		To prove the full result, we return to the $\alpha_{ij}$ and $\beta_{ij}$. We will show that there exists a single unique solution. Similar to $\Theta$ we will define the corresponding $n \times n$ matrices $A$ and $B$ for the $\alpha_{ij}$ and $\beta_{ij}$ respectively. Based on \cref{eq:symplectic_matrix}, to form a symplectic basis we find that $A^T \Theta = I$ and $A^T B = 0$, since $\gamma_{ij} = 0$. Clearly $A^T = \Theta^{-1}$, which is guaranteed to exist, and $B = 0$ since $A$ will also be full rank. To match the $p_j$ values we simply place $Z$ gates in front of the qubits where $p_j = 1$, similar to the $X$ gates for $q_j$.
	\end{proof}
	
	This means we do not lose any kinds of interactions by only considering CNOT circuits, since the only differentiating factors (i.e., the $Z$ gates) do not actually affect the outcome when fed with a computational basis state.
	As such, all given results will be given in terms of simply CNOT circuits.
	
	\subsection{PAC Learning}\label{ssec:pac}
	The goal of PAC learning is to learn a function relative to a certain distribution of inputs, rather than in an absolute sense. Let's say we want to learn an arbitrary $f$ from some concept class $\mathcal{C}$. If a hypothesis function $h$ matches the true function $f$ on many of the high probability inputs, then we can say that we have \textit{approximately learned} $f$. If we can do this with high probability for arbitrary $f$, then we \textit{probably} approximately (PAC) learned $\mathcal{C}$.
	
	\begin{definition}
		Let $\Omega$ be some domain of inputs and let $\mathcal{C}$ be a set of functions $f: \Omega \rightarrow [0, 1]$. We say that $\mathcal{C}$ is $(\epsilon, \delta)$-PAC-learnable if there exists a learner that, when given samples of the form $(x, f(x))$ for $x \sim \mathcal{D}$ for arbitrary $f$ and unknown distribution $\mathcal{D}$, outputs with probability at least $1-\delta$, over both the samples and the learning algorithm, a hypothesis $h$ with error\footnote{Since this problem is a regression problem rather than binary classification, we claim that this squared-loss is a more natural notion of error than the kind used in \citet{aaronson2007learnability}, \citet{2020Caro}, and other older papers \citep{anthony_bartlett_2000}. It also allows us to express PAC learning using only $2$ error parameters, rather than $3$. We can also still recover some form of the $3$ parameter bound using $\Pr_{x \sim \mathcal{D}}[|f(x) - h(x)| > t] \leq \frac{1}{t^2}\ex_{x \sim \mathcal{D}}\mleft[\mleft(f(x) - h(x)\mright)^2\mright]$ via Markov's inequality.} satisfying
		\[
		\ex_{x \sim \mathcal{D}}\mleft[\mleft(f(x) - h(x)\mright)^2\mright] \leq \epsilon.
		\]
		
		The number of samples used is referred to as the sample complexity, and we refer to the learner being \textit{efficient} if it can find such an $h$ in time $\poly(n, \epsilon^{-1}, \delta^{-1})$ for arbitrary $\epsilon$ and $\delta$.
	\end{definition}
	
	From here, one can define two types of learning, based on where $h$ comes from. If $h$ is allowed to be any function that meets the PAC constraints, we refer to this as \textsc{improper learning}. If instead $h \in \mathcal{C}$, we get what is known as \textsc{proper learning}, which will be the focus of this paper. With proper learning, we can then begin to talk about the consistency problem formally.
	
	\begin{definition}
		\sloppy Let $S$ be a set of labeled samples such that $|S| < s$. Let $\textsc{ConsistentSearch}(\mathcal{C}, s)$ be the problem of finding a function $h \in \mathcal{C}$ that is \textit{consistent} with all of $S$ (i.e., for all $(x, f(x)) \in S$, $f(x) = h(x)$) if such an $h$ exists, otherwise reject.
	\end{definition}

	\subsubsection{Generalization}
	
	Intuitively, given a set of samples the best one can really hope to do is find such an $h$ that gets zero training error and hope that the true error for $h$ is also low.
	This leads to the idea of \textit{generalization}, which aims to show that doing well on a large enough set of training data (i.e., the consistency problem) allows one to give the PAC guarantee as well with high probability.
	In terms of computational efficiency, this effectively reduces the problem of proper learning to the consistency problem, or an approximation of the consistency problem.
	The most common approach to guarantee generalization is to bound the ``expressiveness'' of the concept class, such as with the VC-dimension \citep{Blumer1989}.
	Since VC-dimension is defined for $\{0, 1\}$ labels, we will now give a generalization of VC-dimension in the regression setting.
	
	\begin{definition}
		Let $\Omega$ be some domain of inputs, $\eta > 0$, and let $\mathcal{C}$ be a set of functions $f: \Omega \rightarrow [0, 1]$. We say that a set of inputs $\{x_1, x_2, \cdots, x_m\} \subseteq \Omega$ is $\eta$-\textit{fat-shattered} by $\mathcal{C}$ if there exists a set $y_1, y_2, \cdots, y_m \in [0, 1]$ such that for any vector $b = \{\pm 1\}^m$ there is an $f_C \in \mathcal{C}$ that satisfies $b_i \cdot \mleft(f_C(x_i) - y_i \mright) \geq \eta$.
	\end{definition}
	
	\begin{definition}\label{def:fatshatterdim}
		The $\eta$-\textit{fat-shattering dimension} of a concept class $\mathcal{C}$ is the size of the largest set of inputs that is fat-shattered by $\mathcal{C}$. We denote this as $fat_{\mathcal{C}}(\eta)$.
	\end{definition}

	At a high-level, the $\eta$ parameter provides a buffer such that each $f_C(x_i)$ is robustly bounded away from $y_i$ by $\eta$ in the appropriate direction. We now give a result saying that bounded fat-shattering dimension implies generalization from the training data to the actual learning task.
	
	\begin{theorem}[\citet{anthony_bartlett_2000} Corollary 3.3]\label{thm:fat_to_samp}
		Let $\mathcal{C}$ be a concept class from $\Omega$ to $[0, 1]$ and let $\mathcal{D}$ be some distribution on $\Omega$. Let $\delta, \epsilon, \alpha, \beta \in (0, 1)$ be parameters such that $\beta > \alpha$. Furthermore, let $\{x_1, x_2, \cdots, x_m\}$ and $\{y_1, y_2, \cdots, y_m\}$ be a set of $m$ samples drawn i.i.d. from some distribution $\mathcal{D}$ where $y_i = f(x_i)$ for some $f \in \mathcal{C}$. If $h \in \mathcal{C}$ satisfies $\abs{h(x_i) - y_i} \leq \alpha$
		for all $1 \leq i \leq m$ then
		\[
			m = \mathcal{O}\mleft(\frac{1}{\epsilon}\mleft(fat_\mathcal{C}\mleft(\frac{\beta - \alpha}{8}\mright)\log^2\mleft(\frac{fat_\mathcal{C}\mleft(\frac{\beta - \alpha}{8}\mright)}{(\beta - \alpha)\epsilon}\mright) + \log \frac{1}{\delta}\mright)\mright)
		\]
		number of samples suffices to achieve
		\[
			\ex_{x \sim \mathcal{D}}\mleft[\mleft(h(x) - f(x)\mright)^2\mright] \leq (1-\epsilon)\beta^2 + \epsilon
		\]
		with probability at least $1- \delta$ over the samples.
	\end{theorem}

	The following folklore bound on fat-shattering dimension is very loose, but still sufficient for our purposes of complexity-theoretic hardness in \cref{sec:pac_np_hard}.
	
	\begin{lemma}[Folklore]\label{lem:fat_dim_finite}
		Given a concept class $\mathcal{C}$ such that $\abs{\mathcal{C}}$ is finite, then for all $\eta > 0$, $fat_\mathcal{C}(\eta) \leq \log_2 \abs{\mathcal{C}}$.
	\end{lemma}
	\begin{proof}
		\sloppy
		Assume for the sake of contradiction that $\mathcal{C}$ $\eta$-fat-shatters the set of points $\{x_1, x_2, \cdots, x_m\}$ for $m > \log_2 \abs{\mathcal{C}}$.
		Then $\mathcal{C}$ must be able to properly match each $b \in \{\pm 1\}^m$.
		There are $2^m > \abs{\mathcal{C}}$ possible $b$ vectors, and only one $f \in \mathcal{C}$ can be used per shattering attempt, since no $f$ can ever satisfy two different $b$ vectors.
		This is a contradiction since we don't have enough $f \in \mathcal{C}$ to go around to satisfy every $b$ vector.
	\end{proof}

	\subsubsection{Decision Problems}
	
	One can also define the decision version of the consistency problem, which is deciding if there even exists an $h \in \mathcal{C}$ that is consistent with all of $S$. We show that the existence of efficient learning algorithms can imply efficient one-sided error algorithms for the decision version of the consistency problem.
	
	\begin{definition}
		Let $\textsc{ConsistentDecide}(\mathcal{C}, s)$ be decision version of the consistency problem for $\mathcal{C}$ using at most $s$ samples.
	\end{definition}
	
	\begin{proposition}\label{prop:decision2learning}
		An efficient randomized $\mleft(\epsilon < \frac{\alpha^2}{s^2}, \delta < \frac{1}{2} + \frac{1}{2s}\mright)$\footnote{We abuse notation to signify that $\epsilon$ is a value less than $\frac{\alpha^2}{s^2}$ and likewise for $\delta < \frac{1}{2s}$.} proper learning algorithm implies $\textsc{ConsistentDecide}(\mathcal{C}, s) \in \textsf{RP}$ where $\alpha = \inf_{x \in \Omega, f(x) \neq g(x)} \mleft|f(x) - g(x)\mright|$ is the minimum non-zero error any hypothesis function can make on a single input.
	\end{proposition}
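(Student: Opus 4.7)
The plan is to reduce $\textsc{ConsistentDecide}(\mathcal{C}, s)$ to one invocation of the proper PAC learner on a distribution manufactured from the sample set, followed by a deterministic verification. Given an instance $S$ with $|S|\leq s$, I take $\D = \unif(S)$, the uniform distribution on the inputs appearing in $S$. Because $S$ comes equipped with its labels, I can simulate polynomially many labeled draws from $\D$ in polynomial time and feed them to the learner, which by hypothesis runs in polynomial time and returns some $h\in\mathcal{C}$. The reduction then explicitly checks whether $h(x)=f(x)$ for every $(x,f(x))\in S$; accept iff every check passes, otherwise reject.

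The key step is showing that the accuracy bound $\epsilon<\alpha^2/s^2$ is sharp enough to upgrade the learner's ``small average error'' guarantee to ``exact consistency on $S$.'' Suppose the instance is a YES instance, witnessed by some $g\in\mathcal{C}$ consistent with $S$. Then $g$ is a legal target for the samples we generate, so the PAC guarantee yields
\[
	\ex_{x\sim\D}\mleft[\mleft(g(x)-h(x)\mright)^2\mright] \leq \epsilon < \frac{\alpha^2}{s^2}
\]
with probability at least $1-\delta$. Each $x\in S$ carries mass at least $1/s$ under $\D$, and each nonzero squared error contributes at least $\alpha^2$ by definition of $\alpha$; a single disagreement between $h$ and $g$ on $S$ would therefore force the left-hand side to be at least $\alpha^2/s \geq \alpha^2/s^2$, contradicting the bound. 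Hence on the good event $h$ agrees with $g$, and thus with $f$, on every element of $S$, so the verification step accepts.

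For NO instances, by definition no $h\in\mathcal{C}$ agrees with $f$ on all of $S$, so the verification step rejects with probability $1$ regardless of the learner's output. Combining both cases gives the one-sided error required for $\textsf{RP}$: deterministic rejection on NO instances, and acceptance probability at least $1-\delta > 1/2$ on YES instances (using the stated bound on $\delta$). The only mildly subtle point is justifying that we may freely choose $\D$ and $g$ inside the reduction, which is immediate because the PAC specification quantifies over all distributions and all targets in $\mathcal{C}$, so $(\unif(S), g)$ is a legitimate input. There is no deeper obstacle; the whole argument turns on matching the accuracy parameter $\epsilon$ against the minimum per-sample mass $1/s$ and the minimum label gap $\alpha$ so that no discrete mismatch can hide inside the small-expected-error guarantee.
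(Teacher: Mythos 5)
Your proof is correct and, if anything, cleaner than the paper's. The paper's argument runs a coupon-collector step: it imagines the learner drawing $\mathcal{O}(s\log s)$ i.i.d.\ samples from $\D_S$, notes that all of $S$ is seen with probability $\geq 1-\frac{1}{s}$, and then ``pretends $S$ is what we sampled'' when invoking the learner, deriving $p\geq \frac{1}{2}$ for the conditional success probability from $\frac{1}{s}+(1-\frac{1}{s})p \geq 1-\delta$. That conditioning step is delicate: ``the random draw happens to cover $S$'' is not the same event as ``the learner was handed exactly one copy of each element of $S$,'' and the PAC guarantee only binds the former. You sidestep this entirely by having the reduction genuinely simulate i.i.d.\ draws from $\unif(S)$ (possible because $S$ carries its labels) and feed those to the learner, so the PAC guarantee applies verbatim with no conditioning. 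Your observation that a single disagreement on $S$ already forces error $\geq \alpha^2/s$, which dominates the weaker $\alpha^2/s^2$ threshold, is also correct and matches the paper's intent. Two small notes: first, the paper's stated bound $\delta < \frac{1}{2} + \frac{1}{2s}$ is at odds with its own footnote ($\delta < \frac{1}{2s}$) and appears to carry a sign typo; under the footnote's reading your claim $1-\delta>\frac{1}{2}$ is fine, and under any constant lower bound on the acceptance probability the $\textsf{RP}$ conclusion follows by standard amplification, which you could mention explicitly. Second, you correctly flag that the PAC definition quantifies over all $(\D,g)$, so $(\unif(S),g)$ is a legal adversarial choice --- this is exactly the point the paper leaves implicit.
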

	\begin{proof}
		For every set of samples $S$ such that $|S| \leq s$, we can define the $\mathcal{D}_S$ to be the uniform distribution over all $x \in \chi$ such that $(x, f(x)) \in S$. By coupon collector, if we draw $\mathcal{O}(s \log s)$ many samples then with probability at least $1 - \frac{1}{s}$ we will have drawn every item from $S$. Now imagine that there exists some hypothesis $h \in \mathcal{C}$ that is not consistent with $S$. Then our error must be at least $\frac{\alpha^2}{s^2}$ by the definition of $\alpha$.
		
		Now assume we have some efficient randomized $(\epsilon, \delta)$ proper learning algorithm for $\epsilon < \frac{\alpha^2}{s^2}$ and $\delta < \frac{1}{2} + \frac{1}{2s}$. When running the learner on an arbitrary $\mathcal{D}_S$, it will see samples $S$ with probability at least $1-\frac{1}{s}$. To get error less than $\frac{\alpha^2}{s^2}$ the learner must then be able to solve the search version consistency problem with probability $p$ such that $\frac{1}{s} + (1-\frac{1}{s})p \geq 1-\delta$. Solving for $p$ we find $p \geq \frac{1}{2}$ on accepting instances.
		
		This gives rise to the following algorithm in $\textsf{RP}$ for solving $\textsc{ConsistentDecide}(\mathcal{C}, s)$. Given samples $S$ with $|S| \leq s$, we can run our learning algorithm and pretend that $S$ is what we sampled from $\mathcal{D}_S$ to get hypothesis $h$. If $h$ is consistent with $S$ then accept, otherwise reject. On an accepting instance $h$ will be consistent with probability at least $\frac{1}{2}$ while on rejecting instances it will never be consistent so the algorithm will always reject.
	\end{proof}
	
	Informally, if there exists enough structure on the concept class, it can be possible to go the other way and show that an efficient algorithm for $\textsc{ConsistentDecide}(\mathcal{C}, s)$ implies an efficient proper learner for $\mathcal{C}$. Namely, if a search-to-decision reduction exists for the consistency problem on $\mathcal{C}$ and $fat_\mathcal{C}$ is finite then we can also expect to show that an efficient algorithm for the decision problem would imply an efficient proper learner for $\mathcal{C}$. Of particular interest are \np-complete problems, which always admit search-to-decision reductions \citep{katz2011}. We can now give a formal proof of this commonly used technique to show proper PAC learning if $\mathsf{RP} = \np$.
	
	\begin{lemma}\label{lem:learning2decision}
		Let $\mathcal{C}$ be a concept class and let
		\[
		m = \Theta\mleft(\frac{1}{\epsilon}\mleft(fat_\mathcal{C}\mleft(\frac{\beta}{8}\mright)\log^2\mleft(\frac{fat_\mathcal{C}\mleft(\frac{\beta}{8}\mright)}{\beta\epsilon}\mright) + \log \frac{1}{\delta}\mright)\mright)
		\]
		be the parameter from \cref{thm:fat_to_samp} with $\alpha = 0$.
		If $\textsc{ConsistentDecide}(\mathcal{C}, s)$ for $s > m$ is \np-complete and $\mathsf{RP} = \np$ then for $c = \mathcal{O}(\poly(n))$ there exists an efficient algorithm to $\mleft((1-\epsilon)\beta^2 + \epsilon, \delta + \frac{1}{2^c}\mright)$ proper learn $\mathcal{C}$.
	\end{lemma}
	\begin{proof}
		\sloppy
		Because search-to-decision reductions exist for all \np-complete problems \citep{katz2011}, a zero-error oracle for $\textsc{ConsistentDecide}(\mathcal{C}, s)$ can be used to efficiently solve $\textsc{ConsistentSearch}(\mathcal{C}, s)$. Let us run our algorithm for $\textsc{ConsistentSearch}(\mathcal{C}, s)$ on a sample $S$ such that $s \geq |S| \geq m$. We now have an $h \in \mathcal{C}$ such that
		\[
		\mleft|h(x_i) - f(x_i) \mright| = 0 \,\,\,\,\, \forall1 \leq i \leq k
		\]
		and so by \cref{thm:fat_to_samp} 
		\[
		\ex_{x \sim \mathcal{D}}\mleft[\mleft(h(x) - f(x)\mright)^2\mright] \leq (1 - \epsilon)\beta^2 + \epsilon
		\]
		with probability at most $\delta$ over the samples.
		
		Finally, let $\gamma$ be the number of calls to $\textsc{ConsistentDecide}(\mathcal{C}, s)$ used in the search-to-decision reduction. In order for the reduction to be efficient, $\gamma = \mathcal{O}(\poly(n))$. Since $\textsc{ConsistentDecide}(\mathcal{C}, s)$ is in $\np$ and therefore \textsf{RP}, we have an efficient one-sided constant error algorithm $\mathcal{A}$ for $\textsc{ConsistentDecide}(\mathcal{C}, s)$. Using $O(c + \log \gamma) = \mathcal{O}(\poly(n))$ many calls to $\mathcal{A}$ and taking the majority, we can get error at most $\frac{1}{\gamma \cdot 2^c}$. Call this new algorithm $\mathcal{A}'$ and use it in place of the zero-error oracle for $\textsc{ConsistentDecide}(\mathcal{C}, s)$. By the union bound over all $\gamma$ calls to $\mathcal{A}'$, the probability that any query to $\mathcal{A'}$ differs from the zero-error oracle is at most $\frac{1}{2^c}$.
		
		By the union bound over both the samples and the error in $\mathcal{A}'$, the total error probability is at most $\delta + \frac{1}{2^c}$.
	\end{proof}

	\section{PAC Learning Applied to Clifford Circuits}\label{sec:pac_applied_clifford}
	Because of the works of \citet{rocchetto2018stabiliser} and \citet{Lai2022}, Clifford circuits are a prime candidate for an efficiently PAC-learnable class of circuits. We give a very loose bound on the fat-shattering dimension of Clifford circuits that is sufficient for our purposes.

	\begin{lemma}\label{lem:clifford_fat_dim_finite}
		Let $\mathcal{C}$ be the set of Clifford circuits. For all $\eta > 0$, $fat_\mathcal{C}(\eta) \leq \mathcal{O}(n^2)$.
	\end{lemma}
	\begin{proof}
		By \cref{prop:clifford_circuit_number} there are at most $2^{\mathcal{O}(n^2)}$ Clifford circuits. By \cref{lem:fat_dim_finite}, the fat-shattering dimension for all $\eta > 0$ is at most $\log_2\mleft(2^{\mathcal{O}(n^2)}\mright) = \mathcal{O}(n^2)$.
	\end{proof}

	Because CNOT circuits are a subset of Clifford circuits, we can also upper-bound the fat-shattering dimension of CNOT circuits by $\mathcal{O}(n^2)$.

	\subsection{Consistency Problem of Clifford Circuits}
	
	We now turn to the consistency problem. Noting that each Pauli matrix is Hermitian, a very natural way to measure a stabilizer state is in a product basis where we measure each qubit with respect to a Pauli.
	
	\begin{definition}
		If $P \in \mathcal{P}_n$ is a Pauli operator, then the two-outcome measurement associated with $P$ is $\frac{I^{\otimes n} + P}{2}$, and is referred to as a \textsc{Pauli measurement}.
	\end{definition}
	
	\begin{definition}\label{def:pac_clifford}
		Let the problem of PAC learning Clifford circuits with respect to Pauli measurements be defined as follows. Let $C$ be an unknown Clifford circuit and let $\mathcal{D}$ be an unknown joint distribution over both stabilizer states and Pauli measurements. Finally, let samples to $C$ be given as
		\[\mleft(\rho, E, \trc\mleft[E C \rho C^\dagger\mright]\mright)\]
		where $\rho, E \sim \mathcal{D}$ are a stabilizer state and Pauli measurement jointly drawn from $\mathcal{D}$ and represented as classical bit strings using the stabilizer formalism. The goal is to then learn the measurements $\trc\mleft[E C \rho C^\dagger\mright]$ up to error $\epsilon$ under the distribution $\mathcal{D}$.
	\end{definition}
	
	A critical part of \citet{rocchetto2018stabiliser} was noting that the measurement results with Pauli measurements could only have three distinct values:
	\begin{lemma}[\citet{rocchetto2018stabiliser} Lemma 1]\label{lemma:pauli_measure_trace}
		Let $E^P = \frac{I^{\otimes n} + P}{2}$ be a Pauli measurement associated to a Pauli operator $P \in \mathcal{P}_n$ and $\rho$ be an $n$-qubit stabiliser state. Then $\trc\mleft[E^P C\rho C^\dagger\mright]$ can only take on the values $\mleft\{0, \frac{1}{2}, 1\mright\}$, and:
		\begin{align*}
			\begin{cases}
				\trc\mleft[E^P C\rho C^\dagger\mright] = 1 \text{ iff $P$ is a stabilizer of $C \rho C^\dagger$;}\\
				\trc\mleft[E^P C\rho C^\dagger\mright] = 1/2 \text{ iff neither $P$ nor $-P$ is a stabilizer of $C \rho C^\dagger$;}\\
				\trc\mleft[E^P C\rho C^\dagger\mright] = 0 \text{ iff $-P$ is a stabilizer of $C \rho C^\dagger$.}
			\end{cases}
		\end{align*}
	\end{lemma}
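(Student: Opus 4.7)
The plan is to compute $\trc\bigl[E^P C\rho C^\dagger\bigr]$ directly by exploiting the stabilizer structure of $C\rho C^\dagger$. Since Clifford circuits map stabilizer states to stabilizer states, I can write $\sigma := C\rho C^\dagger = \frac{1}{2^n}\sum_{g \in G} g$ for some abelian subgroup $G \subseteq \mathcal{P}_n \setminus \{-I^{\otimes n}\}$ (the stabilizer group of $\sigma$). Also, for $E^P = \frac{I^{\otimes n}+P}{2}$ to be a genuine projector one needs $P^2 = I^{\otimes n}$, i.e., $P$ has real phase, and I will assume this (the claim is only meaningful in that case).

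First I would expand
\[
\trc\mleft[E^P \sigma\mright] = \tfrac{1}{2}\trc[\sigma] + \tfrac{1}{2^{n+1}} \sum_{g \in G} \trc[Pg].
\]
Since $\trc[\sigma] = 1$, the first term contributes $1/2$. For the second term, I use the standard fact that any $n$-qubit Pauli has trace $2^n$ if it equals $I^{\otimes n}$, trace $-2^n$ if it equals $-I^{\otimes n}$, and trace $0$ otherwise (imaginary phases cannot arise because $G$ contains only real-phase Paulis, as established in the earlier proposition, and $P$ has real phase). Using $P^2 = I^{\otimes n}$, the condition $Pg = \pm I^{\otimes n}$ becomes $g = \pm P$, so at most one term in the sum is nonzero.

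Then I perform the three-way case analysis. If $P \in G$, then $g=P$ contributes $\trc[I^{\otimes n}] = 2^n$ and no other term contributes, giving $\trc[E^P \sigma] = \tfrac{1}{2}+\tfrac{1}{2} = 1$. If $-P \in G$, then $g = -P$ contributes $\trc[-I^{\otimes n}] = -2^n$, yielding $\trc[E^P\sigma] = \tfrac{1}{2} - \tfrac{1}{2} = 0$. Otherwise, the Pauli sum vanishes and we get $\tfrac{1}{2}$. The only remaining subtlety is to observe that $P$ and $-P$ cannot both lie in $G$: their product $-I^{\otimes n}$ would then lie in $G$, contradicting the defining condition $-I^{\otimes n} \notin G$, so the three cases are genuinely exhaustive and mutually exclusive.

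There is no real obstacle here; the argument is essentially a direct computation once one recalls that $C\rho C^\dagger$ is itself a stabilizer state with some group $G$, and that characters of Pauli group elements against $\pm I^{\otimes n}$ are the only source of nonzero traces. The most delicate point to spell out cleanly is the exclusion of the simultaneous case $\{P, -P\}\subseteq G$ via the $-I^{\otimes n}\notin G$ hypothesis, but this is a one-line observation.
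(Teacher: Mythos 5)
Your argument is correct and complete. Note that the paper itself does not prove this lemma (it is cited directly from Rocchetto~\citeyearpar{rocchetto2018stabiliser} with no accompanying proof), so there is no in-paper proof to compare against; your direct computation---expanding $C\rho C^\dagger$ in its stabilizer group, using that a Pauli has nonzero trace only when it equals $\pm I^{\otimes n}$, and excluding $\{P,-P\}\subseteq G$ via $-I^{\otimes n}\notin G$---is the standard argument and handles all the necessary subtleties (real phase of $P$, absence of imaginary phases in $G$).
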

	\noindent What information does a single sample tell us? Let $G_i$ be the stabilizer group of $\rho_i$. From this, we can gather that if $\trc\mleft[E^P C\rho_i C^\dagger\mright] = 1$ then $C^\dagger P C \in G_i$, and if $\trc\mleft[E^P C\rho_i C^\dagger\mright] = 0$ then $C^\dagger P C \in -G_i$ where $-G_i = \{-P \mid P \in G_i\}$. Finally, if $\trc\mleft[E^P C\rho_i C^\dagger\mright] = \frac{1}{2}$ then $C^\dagger P C$ is in the complement of $G_i \cup -G_i$.
	
	If the measurement $E^P$ appears multiple times across multiple samples, we can gather further information.
	For instance, have
	\[S_P = \mleft\{\mleft(\rho_i, E^P, \trc\mleft[E^P C\rho C^\dagger\mright]\mright)\mright\}\]
	be the set of all samples such that $E^P$ is the measurement taken and let $G_i$ be the stabilizer group of each $\rho_i$. Based on each label $\trc\mleft[E^P C\rho_i C^\dagger\mright]$, we know that $C^\dagger P C$ must lie in $H_i$, which is one of $G_i$, $-G_i$ or $G_i \cup -G_i$. We then deduce that $C^\dagger P C$ must lie in $\bigcap_i H_i$. To actually be a Clifford circuit, we must also add the constraint that $C^\dagger P C \neq I^{\otimes n}$, giving us
	\[C^\dagger P C \in \mleft(\bigcap_i H_i\mright)\setminus \mleft\{I^{\otimes n}\mright\}.\]
	The problem of finding a Clifford circuit with zero training error then reduces to the search problem of finding a set of $\alpha, \beta, \gamma, \theta, p, q$ from \cref{eq:clifford_gen} representing a $C^\dagger$ that is consistent with all of these constraints while remaining symplectic according to \cref{eq:symplectic_matrix}. Let $\mathcal{C}$ be the set of Clifford circuits. We will call this problem $\textsc{CliffordSearch}(s) = \textsc{ConsistentSearch}(\mathcal{C}, s)$\footnote{We stress that via \cref{def:pac_clifford}, the input states are assumed to be stabilizer states.}.

	Due to Gottesman-Knill \citep{https://doi.org/10.48550/arxiv.quant-ph/9807006, aaronson2004improved} showing that Clifford circuits are classically simulable, the act of verifying that we have a circuit that has zero training error is efficient, meaning that the decision version $\textsc{CliffordDecide}(s) = \textsc{ConsistentDecide}(\mathcal{C}, s)$ of the problem is in \np.
	\fussy
	
	\begin{proposition}\label{prop:in_np}
		\sloppy
		The decision problem, $\textsc{CliffordDecide}(\text{poly}(n))$, of deciding if there exists a Clifford circuit consistent with polynomially sized sample $S$ is in \np.
	\end{proposition}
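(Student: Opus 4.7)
The plan is to exhibit a polynomial-size witness and a polynomial-time verifier, which are the two ingredients required for membership in \np. The natural witness is a description of a candidate Clifford circuit $C$ itself. By \cref{lemma:clifford_circuit_size}, every Clifford circuit has an equivalent circuit with $\mathcal{O}(n^2)$ gates, and moreover the symplectic tableau encoding described after \cref{eq:clifford_gen} gives a deterministic $(2n+1)\times 2n$ boolean matrix representation of size $\mathcal{O}(n^2)$ bits. Either representation serves as a witness of size $\poly(n)$.

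For the verifier, given the sample set $S = \{(\rho_i, E^{P_i}, t_i)\}_{i=1}^{|S|}$ with $|S| = \poly(n)$ and a candidate $C$, the verifier should accept iff $\trc[E^{P_i} C \rho_i C^\dagger] = t_i$ for every $i$. First I would check that the tableau is actually symplectic, i.e.\ satisfies \cref{eq:symplectic_matrix}; this is a constant-depth linear-algebra check over $\mathbb{F}_2$ and runs in $\poly(n)$ time. Then, for each sample, I invoke the Gottesman--Knill simulation of \citet{https://doi.org/10.48550/arxiv.quant-ph/9807006, aaronson2004improved}: starting from the stabilizer tableau of $\rho_i$, evolve by $C$ (equivalently, conjugate $P_i$ by $C^\dagger$ using the tableau) to obtain the stabilizer group of $C\rho_i C^\dagger$, and then use \cref{lemma:pauli_measure_trace} to read off whether $\trc[E^{P_i} C \rho_i C^\dagger]$ equals $0$, $\tfrac12$, or $1$, by testing whether $C^\dagger P_i C$ (up to sign) lies in the stabilizer group of $\rho_i$. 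Each such test is $\poly(n)$ by Gaussian elimination on the symplectic representation of the generators.

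Summing over the $\poly(n)$ samples gives a total verification time of $\poly(n)$. Hence the language $\textsc{CliffordDecide}(\poly(n))$ admits a polynomial-size witness verifiable in polynomial time, placing it in \np. The only subtlety worth flagging is bookkeeping of the sign/phase bits $p_j, q_j$ during conjugation so that the comparison against $\pm S$ in \cref{lemma:pauli_measure_trace} is done correctly; this is standard in the tableau formalism and is not a genuine obstacle. There is no hard step here — the proposition is essentially a direct corollary of the efficiency of the stabilizer formalism.
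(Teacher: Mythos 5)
Your proof takes exactly the same approach as the paper's: use the $(2n+1)\times 2n$ tableau as the polynomial-size witness, verify the symplectic condition of \cref{eq:symplectic_matrix} against $\Lambda(n)$, and then check consistency with each sample in $\poly(n)$ time via Gottesman--Knill simulation. Your write-up is a slightly more detailed version of the paper's one-paragraph argument, but the ideas are identical.
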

	\begin{proof}
		Given a set of $\alpha, \beta, \gamma, \theta, p, q$, it easy to check that the $\gamma, \theta, p, q$ form a symplectic matrix by checking with respect to $\Lambda(n)$. Checking that they are consistent with the samples in $S$ can be done by iterating through $S$ since the trace can be computed efficiently using Gottesman-Knill \citep{aaronson2004improved}.
	\end{proof}
	
	\sloppy \noindent Knowing this, we find that $\textsc{CliffordSearch}(\text{poly}(n)) \in \mathsf{FNP}$. This property extends to the analogous problems for CNOT circuits, $\textsc{CNOTSearch}(\text{poly}(n))$ and $\textsc{CNOTDecide}(\text{poly}(n))$, since one can also efficiently verify that $\gamma_{ij} = 0$ and $p_j = 0$ for all $i$ and $j$.
	
	\fussy
	\section{Generating Samples with Certain Constraints} \label{sec:gen_constraint}
	We will now show how we can use samples from PAC learning to generate certain kinds of constraints. It will suffice to only consider CNOT circuits with computational basis state measurements and measurements of the form $\mleft\{I, Z\mright\}^{\otimes n}$. The net effect of this is that from a PAC learning standpoint, for unknown CNOT circuit $C$ we only need to figure a set of $C^\dagger Z_i C$ that is consistent with the samples as described in \cref{sec:pac_applied_clifford}. Since we will never be tested on a measurement with some component of $X_i$ involved, this is equivalent to finding the $\theta_{ij}$ and $q_j$ values from \cref{eq:clifford_gen} of $C^\dagger$. We will again choose to view the $\theta_{ij}$ as the matrix $\Theta$, such that $\Theta$ must be full rank.
	
	\begin{definition}
		Given a set of abelian generators $\{P_i\}$, let \[\rho(P_1, P_2, \dots, P_n) = \frac{1}{2^n}\sum_{P \in \langle P_1, \cdots, P_n \rangle} P\] be the stabilizer state that is formed from that stabilizer group.
	\end{definition}
	
	The following observation will notationally make the proceeding theorem states and proofs easier to follow.
	\begin{observation}
		Any one-dimensional affine subspace $v + \langle w \rangle$ can be represented as $\{v, v+w\}$ and any set of two vectors/matrices $\{v, w\}$ represents the one-dimensional affine subspace $v + \langle v+ w \rangle$. Thus we can freely move between the two representations.
	\end{observation}
	
	\begin{lemma}\label{lemma:create_sol_basic}
		Let $C$ be a CNOT circuit on $n$ qubits and have $\{v, v+w\} \subset \{0, 1\}^n$ be a one-dimensional affine subspace of column vectors such that $v \neq w$ and $w,v \neq 0$. Given an arbitrary pauli $P$ there exists a set of $n$ samples that constrains $C^\dagger P C$ to only have consistent solutions lying in $\{Z^v, Z^{v+w}\}$. Furthermore these $n$ samples can be efficiently generated.
	\end{lemma}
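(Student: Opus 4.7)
The plan is to use computational basis states paired with the single measurement $E^P$ and encode the desired two-element constraint as an $\mathbb{F}_2$ linear system. Since the section restricts attention to $Z$-type Pauli measurements, I may take $P = Z^p$ for some $p \in \{0,1\}^n$; an overall sign on $P$ would only uniformly flip each outcome and does not change the argument. For any CNOT circuit $\tilde{C}$, a direct calculation using $\tilde{C}\ket{x} = \ket{\tilde{\Theta}x + \tilde{q}}$ (from \cref{prop:sim_cnot}) gives $\tilde{C}\conj P \tilde{C} = (-1)^{\tilde{s}} Z^{\tilde{c}}$, where $\tilde{c} = \tilde{\Theta}\tr p$ and $\tilde{s} = p \cdot \tilde{q}$ are determined by the matrix $\tilde{\Theta}$ and bias $\tilde{q}$ of $\tilde{C}$. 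The conclusion $\tilde{C}\conj P \tilde{C} \in \{Z^v, Z^{v+w}\}$ is therefore equivalent to pinning the pair $(\tilde{c}, \tilde{s}) \in \mathbb{F}_2^{n+1}$ to the affine set $\{(v,0),(v+w,0)\}$.

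A sample on a computational basis state $\ketbra{x}{x}$ with measurement $E^P$ has deterministic outcome $\tfrac{1}{2}(1 + (-1)^{\tilde{c} \cdot x + \tilde{s}})$, so a label $\ell \in \{0,1\}$ is consistent with $\tilde{C}$ iff $\tilde{c} \cdot x + \tilde{s} = 1 - \ell \pmod 2$, which is a single $\mathbb{F}_2$ linear equation on the $n+1$ unknowns $(\tilde{c}, \tilde{s})$. My construction is to take $x_1 = 0$ together with any basis $x_2, \dots, x_n$ of the hyperplane $H = \{x \in \mathbb{F}_2^n : w \cdot x = 0\}$, and to set labels $\ell_i := 1 - v \cdot x_i$. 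A basis for $H$ is computable by Gaussian elimination over $\mathbb{F}_2$, each $\ketbra{x_i}{x_i}$ has an $O(n)$ stabilizer description, and each label is a single inner product mod $2$, so all $n$ samples can be produced in polynomial time.

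To verify correctness, after substituting $\ell_i$ each equation reads $\tilde{c} \cdot x_i + \tilde{s} = v \cdot x_i$. The first sample ($x_1 = 0$) forces $\tilde{s} = 0$. The remaining $n-1$ equations become $(\tilde{c} - v) \cdot x_i = 0$ against a basis $x_2, \dots, x_n$ of $H$, so $\tilde{c} - v \in H^\perp = \Span(w) = \{0, w\}$, and hence $\tilde{c} \in \{v, v+w\}$. Thus every CNOT circuit $\tilde{C}$ consistent with the $n$ samples satisfies $\tilde{C}\conj P \tilde{C} \in \{Z^v, Z^{v+w}\} = Z^v \cdot \Span(Z^w)$, as required. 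The hypotheses $v \neq 0$ and $v \neq w$ ensure that both $Z^v$ and $Z^{v+w}$ are non-identity, which is consistent with the fact that Clifford conjugation of a non-identity Pauli is never the identity.

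I do not anticipate a real obstacle: the whole lemma is a compact $\mathbb{F}_2$ linear-algebra calculation. Feasibility of the system is automatic because $(\tilde{c}, \tilde{s}) = (v, 0)$ satisfies every equation by construction, and the sample budget is tight --- the $n-1$ equations coming from a basis of $H$ cut the $\tilde{c}$-direction down to the required affine line $v + \Span(w)$, while the one remaining equation (using $x_1 = 0$) pins the sign bit $\tilde{s}$.
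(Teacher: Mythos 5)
Your proof is correct and is essentially the same construction as the paper's, just unpacked into explicit $\mathbb{F}_2$ linear algebra: the paper's states $\rho(\pm Z^{v},\pm Z^{w},\pm Z^{v_3},\dots)$ are precisely the computational basis states $\ketbra{x}{x}$ for $x=0$ and for the dual basis vectors spanning the hyperplane $w^\perp$, with the same labels $1 - v\cdot x$, so your one-equation-per-sample system and the paper's generator-flipping argument cut out the identical constraint $(\tilde c,\tilde s)\in\{(v,0),(v+w,0)\}$.
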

	\begin{proof}
		Let $(v, w, v_3, \cdots, v_n)$ be an arbitrary basis for $\{0, 1\}^n$ containing $v$ and $w$. This can be found with $\mathcal{O}(n)$ random vectors and the use of Gaussian elimination. Recalling \cref{def:power_z}, let us start by creating the sample
		\begin{align*}
			\mleft(\mleft(\rho\mleft(Z^{v}, Z^{w}, Z^{v_3}, Z^{v_4}, \dots, Z^{v_n}\mright), \frac{I^{\otimes n} + P}{2}\mright), 1 \mright),
		\end{align*}
		which limits $C^\dagger P C$ to be in $\{I, Z\}^{\otimes n}$ with positive phase.
		We can create the set of samples:
		\begin{align*}
			\bigg(\bigg(\rho(Z^{v}, Z^{w}, -Z^{v_3}, &Z^{v_4}, \dots, Z^{v_n}), \frac{I^{\otimes n} + P}{2}\bigg), 1 \bigg),\\
			\bigg(\bigg(\rho(Z^{v}, Z^{w}, Z^{v_3}, -&Z^{v_4}, \dots, Z^{v_n} ), \frac{I^{\otimes n} + P}{2}\bigg), 1 \bigg),\\
			&\vdots\\
			\bigg(\bigg(\rho(Z^{v}, Z^{w}, Z^{v_3}, &Z^{v_4}, \dots, -Z^{v_n}), \frac{I^{\otimes n} + P}{2}\bigg), 1\bigg).
		\end{align*}
		
		\noindent By construction $C^\dagger P C$ cannot have any component of $Z^{v_3}$ because of the first sample of this set, nor any $Z^{v_i}$ for $i > 3$ due to the remaining samples. This leaves $C^\dagger P C$ to be one of $Z^{v}$, $Z^{w}$, or $Z^{v + w}$ (since it cannot be identity). To remedy this, we can introduce the final sample:
		\begin{align*}
			\bigg(\bigg(\rho(-Z^{v}, Z^{w}, Z^{v_3}, Z^{v_4}, \dots Z^{v_n}, \frac{I^{\otimes n} + P}{2}\bigg), 0\bigg),
		\end{align*}
		which then eliminates $Z^{w}$ (and identity, due to the negative sign).
		The total number of samples is $n$ and the whole process takes polynomial in $n$ time to find the basis and create said samples.
	\end{proof}
	
	We can easily extend this to the $0$-dimensional case by simply treating $w$ as $v_2$, using an extra sample to remove the last dimension. More importantly, let's say we've constrained $C^\dagger Z^x C$ to lie in $\{Z^v, Z^{v+w}\}$. The effect of this on $\Theta$ is that if we sum the columns $i$ where $x_i = 1$ then the sum must lie in $\{v, v+w\}$.
	
	\begin{corollary}\label{cor:create_sol}
		Let 
		\[
		\{v, v+w\} =
		\mleft\{\begin{bmatrix}
			\vert & \vert & \vert & \vert\\
			v_1 & v_2& \dots& v_k\\
			\vert & \vert & \vert & \vert\\
		\end{bmatrix},
		\begin{bmatrix}
			\vert & \vert & \vert & \vert\\
			v_1 + w_1 & v_2 + w_2& \dots& v_k + w_k\\
			\vert & \vert & \vert & \vert\\
		\end{bmatrix}\mright\}
		\]
		be a one-dimensional affine subspace of $n \times k$ matrices over $\{0, 1\}$ such that for all $i$, $v_i \neq w_i$ and $v_i, w_i \neq 0$. Finally, let $\Theta'$ be an arbitrary $n \times k$ submatrix of $\Theta$. Then there exists a set of $(2k-1)n$ samples that constrain $\Theta'$ to only have consistent solutions lying in $\{v, v+w\}$ for CNOT circuit $C$. Furthermore these samples can be efficiently generated.
	\end{corollary}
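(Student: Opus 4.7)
The plan is to split the construction into two phases: first use Lemma~\ref{lemma:create_sol_basic} $k$ times to independently constrain each column of $\tilde\Theta'$ to a pair of possible values, then add $k-1$ linking samples that enforce lockstep across columns. In phase one, for each $j \in \{1, \ldots, k\}$, I would apply Lemma~\ref{lemma:create_sol_basic} with the Pauli $P = Z_{i_j}$, where $i_j$ denotes the index in $\{1, \ldots, n\}$ of the $j$-th column of $\Theta'$ inside $\Theta$. Since $\tilde C^\dagger Z_{i_j} \tilde C = Z^{\tilde\theta_{i_j}}$ for any CNOT hypothesis $\tilde C$, these $n$ samples pin the $i_j$-th column of $\tilde\Theta$ to $\{v_j, v_j + w_j\}$; the corollary's hypotheses $v_j \neq w_j$ and $v_j, w_j \neq 0$ are exactly the lemma's preconditions. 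After phase one ($kn$ samples total), the submatrix $\tilde\Theta'$ sits in one of $2^k$ configurations parametrized by $x \in \{0,1\}^k$ with $\tilde\theta'_j = v_j + x_j w_j$, and the goal of phase two is to kill the $2^k - 2$ configurations in which $x$ is not constant.

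Phase two adds a single linking sample per adjacent pair of columns. For each $j \in \{1, \ldots, k-1\}$, pick a linear functional $\ell_j : \mathbb{F}_2^n \to \mathbb{F}_2$ satisfying $\ell_j(w_j) = \ell_j(w_{j+1}) = 1$; such $\ell_j$ exists since $w_j, w_{j+1}$ are nonzero (if they are equal, a single condition needs to be met; otherwise they are linearly independent over $\mathbb{F}_2$, so both values can be prescribed). Encode $\ell_j$ as a pure stabilizer state $\rho_j$ with generators $(-1)^{\ell_j(b_i)} Z^{b_i}$ for any basis $(b_1, \ldots, b_n)$ of $\mathbb{F}_2^n$; the resulting stabilizer group $\{(-1)^{\ell_j(x)} Z^x : x \in \mathbb{F}_2^n\}$ is abelian and does not contain $-I^{\otimes n}$. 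Set the label $t_j \in \{0, 1\}$ equal to $1 - \ell_j(v_j + v_{j+1})$ and take the sample
\[
\mleft(\rho_j,\; \tfrac{I^{\otimes n} + Z^{e_{i_j} + e_{i_{j+1}}}}{2},\; t_j\mright).
\]

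By Lemma~\ref{lemma:pauli_measure_trace}, together with the identity $\tilde C^\dagger Z^u \tilde C = Z^{\tilde\Theta u}$ valid for CNOT circuits, consistency with this sample is equivalent to the single $\mathbb{F}_2$-linear constraint $\ell_j(\tilde\theta'_j + \tilde\theta'_{j+1}) = \ell_j(v_j + v_{j+1})$. Substituting $\tilde\theta'_j = v_j + x_j w_j$ from phase one and using $\ell_j(w_j) = \ell_j(w_{j+1}) = 1$ collapses this to $x_j + x_{j+1} = 0$ in $\mathbb{F}_2$. Chaining these equalities over $j = 1, \ldots, k-1$ forces $x_1 = \cdots = x_k$, which is exactly the lockstep condition $\tilde\Theta' \in \{v, v + w\}$ demanded by the corollary. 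The total count is $kn + (k-1)$, and since each $\rho_j$ admits an $\mathcal{O}(n^2)$-bit stabilizer description and each $\ell_j$ is obtained by solving a $2$-equation linear system over $\mathbb{F}_2$, the whole construction runs in $\poly(n, k)$ time. The main subtlety is verifying that the particular pure $Z$-type stabilizer state chosen converts a measurement outcome into a clean $\mathbb{F}_2$-linear constraint on $\tilde\Theta u$ rather than a more awkward coset constraint, and that the requisite $\ell_j$ always exists; both rely on the standing hypotheses $w_j, w_{j+1} \neq 0$.
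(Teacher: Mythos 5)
Your proof is correct, and it reaches the stated bound of $kn + k - 1$ samples exactly, which is cleaner than the paper's accounting. Both arguments share the same skeleton: first constrain each of the $k$ columns of $\Theta'$ independently via Lemma~\ref{lemma:create_sol_basic} (costing $kn$ samples and pinning each column $i_j$ to $\{v_j, v_j + w_j\}$ \emph{with positive phase}, i.e.\ $\tilde q'_{i_j}=0$), and then add further samples to force the remaining $2^k$ configurations down to the two that match $v$ or $v+w$. Where you diverge is the linking step. The paper proceeds inductively and invokes Lemma~\ref{lemma:create_sol_basic} a second time on the Pauli $Z_{i_1} Z_{i_2}$ to constrain the \emph{sum} of two adjacent columns to $v_1 + v_2 + \Span(w_1 + w_2)$ --- as written that reapplication costs another $n$ samples per inductive step, which does not literally match the claimed $n+1$ per step; the construction is correct but the count is loose. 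You instead observe that, after the per-column constraints, only one additional bit of information is needed per adjacent pair, and you extract exactly that bit: you build a single bespoke sample (a $Z$-type computational-basis stabilizer state $\rho_j$ encoding a linear functional $\ell_j$ with $\ell_j(w_j) = \ell_j(w_{j+1}) = 1$, paired with the measurement $\frac{I + Z_{i_j}Z_{i_{j+1}}}{2}$) whose consistency condition reduces to $x_j + x_{j+1} = 0$. This hits $k-1$ linking samples on the nose. One small point worth making explicit in your write-up: the identity $\tilde C^\dagger Z^u \tilde C = Z^{\tilde\Theta u}$ is not generally phase-free for CNOT circuits (there is a $(-1)^{\tilde q' \cdot u}$ factor), but the phase-one samples already force $\tilde q'_{i_j} = 0$ for all $j$, so the phase contribution on $u = e_{i_j} + e_{i_{j+1}}$ vanishes and the linear constraint you derive is the right one.
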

	\begin{proof}
		WLOG, we will let the set of $k$ different columns we choose for $\Theta'$ to be columns $1$ through $k$. We will use induction on $k$ to prove this corollary, with the base case covered by \cref{lemma:create_sol_basic}. Now let us assume that we have samples that constrain columns $2$ through $k$ to be either
		\[
		\begin{bmatrix}
			\vert & \vert & \vert & \vert\\
			v_2& v_3& \dots& v_k\\
			\vert & \vert & \vert & \vert\\
		\end{bmatrix} \text{ or } \begin{bmatrix}\vert & \vert & \vert & \vert\\v_2 + w_2& v_2 + w_3& \dots & v_k + w_k\\ \vert & \vert & \vert & \vert\\\end{bmatrix}.
		\]
		The goal will be to generate constraints such that if column 2 is $v_2$ then column 1 must be $v_1$. Otherwise, if column 2 is $v_2 + w_2$ then column 1 is constrained to be $v_1 + w_1$.
		
		To start us off, we can use \cref{lemma:create_sol_basic} to constrain the sum of columns $1$ and $2$ to be either $v_1 + w_1$ or $v_1 + w_1 + v_2 + w_2$. If we focus on columns $1$ and $2$, the solutions to this constraint lie in an affine subspace defined by:
		\[
		\begin{bmatrix}
			\vert & \vert \\
			v_1+ v_2 + u& u\\
			\vert & \vert
		\end{bmatrix} \text{or}
		\begin{bmatrix}
			\vert & \vert \\
			v_1 + w_1 + v_2 + w_2 + u & u\\
			\vert & \vert
		\end{bmatrix}
		\]
		for arbitrary vector $u$.
		We then apply \cref{lemma:create_sol_basic} again to constrain column $1$ to be either $v_1$ or $v_1+w_1$.
		Thus the first two columns must either be
		\[
		\begin{bmatrix}
			\vert & \vert \\
			v_1& v_2\\
			\vert & \vert
		\end{bmatrix} \text{or}
		\begin{bmatrix}
			\vert & \vert \\
			v_1 + w_1  & v_2 + w_2\\
			\vert & \vert
		\end{bmatrix}
		\]
		Finally, to lie in the intersection from the inductive hypothesis, we note that if the second column is $v_2$ or $v_2 + w_2$ then columns $3$ through $k$ must be
		\[
		\begin{bmatrix}
			\vert & \vert & \vert & \vert\\
			v_3& v_4& \dots& v_k\\
			\vert & \vert & \vert & \vert\\
		\end{bmatrix} \text{or} \begin{bmatrix}
			\vert & \vert & \vert & \vert\\
			v_3+w_3 & v_4+w_4 & \dots& v_k+w_k\\
			\vert & \vert & \vert & \vert\\
		\end{bmatrix}.
		\]
		respectively.
		
		Collectively, we achieve our goal of constraining the entire solution to lie in $v + \langle w \rangle$. We used $n$ samples at the first step and $2n$ for every inductive step after (one set of $n$ samples for each call of \cref{lemma:create_sol_basic}), giving us a total number of samples of $2n(k-1) + n = 2kn - 2n + n = (2k-1)n$. Since each step was efficient, the whole process takes polynomial in $n$ time to generate all of the samples.
	\end{proof}
	
	\section{On the \np-completeness of \textsc{NonSingularity}}
	
	\begin{definition}
		Given $n \times n$ matrices $M_0, M_1, \dots, M_m$ over some field $\mathbb{F}$, \textsc{NonSingularity} is the problem of deciding if there exists $\alpha_1, \alpha_2, \dots, \alpha_m \in \mathbb{F}$ such that $M_0 + \sum_i \alpha_i M_i$ results in a non-singular matrix.
	\end{definition}
	
	\begin{theorem}[\citet{BUSS1999572} Corollary 10]
		\textsc{NonSingularity} over $\mathbb{F}_2$ is \np-complete.
	\end{theorem}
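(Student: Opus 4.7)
The plan is to prove NP-completeness in the two standard steps: membership in NP and NP-hardness via a reduction from 3SAT.

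For membership in NP, a polynomial-size certificate is just the assignment $\alpha_1, \ldots, \alpha_k \in \mathbb{F}_2$. Given such an assignment, one computes $M_0 + \sum_i \alpha_i M_i$ in polynomial time and then checks non-singularity via Gaussian elimination over $\mathbb{F}_2$, which is also polynomial.

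For NP-hardness, the plan is to reduce from 3SAT. Given a formula $\phi$ on variables $x_1, \ldots, x_n$ with clauses $C_1, \ldots, C_m$, identify the free parameter $\alpha_i$ with the truth value of $x_i$. The goal is to construct $M_0, M_1, \ldots, M_n$ over $\mathbb{F}_2$ such that $M_0 + \sum_i \alpha_i M_i$ is non-singular if and only if $\phi(\alpha_1, \ldots, \alpha_n) = 1$. I would do this with a block-diagonal construction: build one constant-size block per clause, with each block's entries being affine functions of the $\alpha_i$ (i.e., each entry is of the form $0$, $1$, $\alpha_i$, or $1+\alpha_i$), designed so that the block is non-singular over $\mathbb{F}_2$ exactly when the corresponding clause is satisfied. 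Because a block-diagonal matrix has determinant equal to the product of the block determinants, the full matrix is non-singular iff every block is non-singular iff every clause is satisfied. Since the $\alpha_i$ enter each entry only affinely, the whole matrix can be written as $M_0 + \sum_i \alpha_i M_i$ by collecting the constant parts into $M_0$ and the coefficient of $\alpha_i$ into $M_i$.

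The main technical step, and the expected obstacle, is the clause gadget: exhibiting a constant-size matrix over $\mathbb{F}_2$ whose entries are literals (or their negations, or constants) and whose determinant, viewed as a polynomial in $\mathbb{F}_2[\alpha_1, \ldots, \alpha_n]$, evaluates to the Boolean OR of the three literals. In principle such a gadget must exist, since over $\mathbb{F}_2$ the determinant and permanent coincide and any Boolean function can be written as an $\mathbb{F}_2$-polynomial in its inputs; the work is to give an explicit small construction (for example, a $4 \times 4$ block) and verify by direct expansion that its determinant is $1$ precisely on the seven satisfying truth assignments to the three literals. Once this gadget is in hand, verifying that the whole construction is polynomial-time and that the reduction preserves YES/NO answers is routine, completing the proof that \textsc{NonSingularity} over $\mathbb{F}_2$ is \np-Complete.
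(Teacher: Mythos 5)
Your proposal is correct in outline but takes a genuinely different route from the Buss--Frandsen--Shallit argument the paper sketches, and you have explicitly left the key gadget unconstructed. Buss et al.\ arithmetize the entire 3SAT instance into one $\mathbb{F}_2$-formula $F$, translate $F$ into a weighted series--parallel $s$--$t$ graph $G_F$, and show that the adjacency matrix $M(x)$ of $G_F$ satisfies $\det M(x) = F(x)$, so that non-singularity of a single matrix encodes satisfiability. You instead propose one constant-size block per clause stitched together block-diagonally, using $\det(\mathrm{diag}(B_1,\dots,B_m)) = \prod_j \det B_j$. Both are valid, and yours is arguably the more elementary presentation of the hardness direction. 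However, the graph-based construction is what the paper actually leans on downstream: it is engineered so that the $M_i$ are supported on pairwise disjoint columns and so that each restriction $M_0 + \Span(M_i)$ has the $v_j \neq w_j$, $v_j, w_j \neq 0$ form --- this is precisely what \cref{cor:disjoint_column} extracts and what \cref{cor:create_sol} consumes. A block-diagonal gadget does not automatically deliver those properties, so for the paper's purposes one would still need either the Buss et al.\ construction or a separate argument that your gadget has the needed structure.

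On the gap you flagged: the justification that ``$\det = \mathrm{per}$ over $\mathbb{F}_2$ and every Boolean function is an $\mathbb{F}_2$-polynomial'' does not by itself show the clause polynomial is the determinant of a small matrix with affine entries --- that inference is exactly the content of the formula-to-determinant reduction, which is the nontrivial machinery here. For a $3$-literal clause you can, however, just exhibit the gadget. Let $m_i = 1 + \ell_i$ denote the negation of the $i$-th literal (so each $m_i$ is $\alpha_j$ or $1 + \alpha_j$, still affine in a single variable), and take
\[
B = \begin{pmatrix} 1 & m_1 & 0 & 0 \\ 0 & 1 & m_2 & 0 \\ 0 & 0 & 1 & m_3 \\ 1 & 0 & 0 & 1 \end{pmatrix}.
\]
Expanding along the first column gives $\det B = 1 + m_1 m_2 m_3$ over $\mathbb{F}_2$, which equals $1$ iff some $m_i = 0$ iff some $\ell_i = 1$ iff the clause is satisfied. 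Placing one such $4 \times 4$ block per clause and collecting the constant parts into $M_0$ and the $\alpha_i$-coefficients into $M_i$ completes your reduction; with this in hand your proof is sound.
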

	
	The high level idea of the proof is to first reduce a 3SAT instance over variables $\{x_i\}$ to solving an arithmetic formula $F$. The formula is then turned into a weighted directed graph whose adjacency matrix $M(x)$ has a determinant that is equal to the formula $F$, where $M(x)$ has entries from $\{0, 1\} \cup \{x_i\}$, and can thus be viewed as an affine subspace over $\mathbb{F}_2^{(|F|+2) \times (|F| + 2)}$.
	
	While we will not prove the correctness of this statement, we will want to ascertain exactly what kind of $M_i$ are formed through the reduction. We now describe the construction of the graph (see \cref{fig:table} and \cref{fig:example} for relevant illustrations):
	
	\begin{figure}
		\centering
		\renewcommand{\arraystretch}{5}
		\begin{tabular}{|c|c|}
			\hline
			Formula $F$ & The series-parallel $s$-$t$ graph $G_F$ with edge withs\\
			\hline
			Constant $c$ & \begin{tikzpicture}[arrows={-Triangle[angle=30:10pt]},baseline=(a2.south)]
				\node [circle, draw, fill=black,scale=0.4,label=left:$s$] (a1) at (0,0) {};
				\node [circle, draw, fill=black,scale=0.4] (a2) at (2,0) {};      
				\node [circle, draw, fill=black,scale=0.4,label=right:$t$] (a3) at (4,0) {};
				\draw (a1) -- node [above] {$c$}(a2);
				\draw (a2) -- node [above] {$1$}(a3);
			\end{tikzpicture}\\
			Variable $x$ & \begin{tikzpicture}[arrows={-Triangle[angle=30:10pt]},baseline=(a2.south)]
				\node [circle, draw, fill=black,scale=0.4,label=left:$s$] (a1) at (0,0) {};
				\node [circle, draw, fill=black,scale=0.4] (a2) at (2,0) {};      
				\node [circle, draw, fill=black,scale=0.4,label=right:$t$] (a3) at (4,0) {};
				\draw (a1) -- node [above] {$x$}(a2);
				\draw (a2) -- node [above] {$1$}(a3);
			\end{tikzpicture}\\
			$F = F_1 \cdot F_2$ & \begin{tikzpicture}[arrows={-Triangle[angle=30:10pt]},baseline=(a2.south)]
				\node [ellipse, draw, fill=gray!20, anchor=west, minimum width=2cm] (a4) at (0, 0) {$G_{F_1}$};
				\node [ellipse, draw, fill=gray!20, anchor=west, minimum width=2cm] (a5) at (2, 0) {$G_{F_2}$};
				\node [circle, draw, fill=black,scale=0.4,label=left:$s\equal s_1$] (a1) at (0,0) {};
				\node [circle, draw, fill=black,scale=0.4, label={[label distance = 0.4cm]below:$t_1\equal s_2$}] (a2) at (2,0) {};      
				\node [circle, draw, fill=black,scale=0.4,label=right:$t\equal t_2$] (a3) at (4,0) {};
			\end{tikzpicture}\\
			$F = F_1 + F_2$ & \begin{tikzpicture}[arrows={-Triangle[angle=30:10pt]},baseline=(a1.south)]
				\node [ellipse, draw, fill=gray!20, anchor=west, minimum width=4cm, minimum height=2cm,label={[label distance = -0.8cm]above:$G_{F_1}$}] (a4) at (0, 0) {};
				\node [ellipse, draw, fill=white, anchor=west, minimum width=4cm, minimum height=0.5cm,label={below:$G_{F_2}$}] (a4) at (0, 0) {};
				\node [circle, draw, fill=black,scale=0.4,label=left:$s\equal s_1 \equal s_2$] (a1) at (0,0) {};    
				\node [circle, draw, fill=black,scale=0.4,label=right:$t\equal t_1 \equal t_2$] (a3) at (4,0) {};
				\node[circle] (a6) at (0, -1.5) {};
			\end{tikzpicture}\\
			\hline
		\end{tabular}
		\caption{Inductive Construction from Formula to Graph}
		\label{fig:table}
	\end{figure}
	
	\begin{figure}
		\centering
		\[F: x_1(x_2 + x_3) + x_3 \cdot x_4\]
		%\tikzset{every loop/.style={min distance=1cm,in=0,out=60,looseness=10}}
		\begin{tikzpicture}[arrows={-Triangle[angle=30:10pt]}]
			\node [circle, draw, fill=black, scale=0.4, label=left:$s$] (a1) at (0, 0) {};
			\node [circle, draw, fill=black, scale=0.4] (a2) at (2, 0) {};
			\node [circle, draw, fill=black, scale=0.4] (a3) at (4, 0) {};
			\node [circle, draw, fill=black, scale=0.4] (a4) at (6, 1) {};
			\node [circle, draw, fill=black, scale=0.4] (a5) at (6, -1) {};
			\node [circle, draw, fill=black, scale=0.4] (a6) at (2.5, 2) {};
			\node [circle, draw, fill=black, scale=0.4] (a7) at (5, 3) {};
			\node [circle, draw, fill=black, scale=0.4] (a8) at (7.25, 2) {};
			\node [circle, draw, fill=black, scale=0.4, label=right:$t$] (a9) at (8, 0) {};
			
			\draw (a1) -- node [above] {$x_1$}(a2);
			\draw (a2) -- node [above] {$1$}(a3);
			\draw (a3) -- node [above] {$x_2$}(a4);
			\draw (a3) -- node [above] {$x_3$}(a5);
			\draw (a4) -- node [above] {$1$}(a9);
			\draw (a5) -- node [above] {$1$}(a9);
			\draw (a1) -- node [above] {$x_3$}(a6);
			\draw (a6) -- node [above] {$1$}(a7);
			\draw (a7) -- node [above] {$x_4$}(a8);
			\draw (a8) -- node [above] {$1$}(a9);
			\draw (a9) .. controls (6,-3) .. node [below] {$1$}(a1);
			\path (a2) edge [loop above, min distance=1cm, in=60, out=120] node {$1$} (a2);
			\path (a3) edge [loop above, min distance=1cm, in=60, out=120] node {$1$} (a3);
			\path (a4) edge [loop above, min distance=1cm, in=60, out=120] node {$1$} (a4);
			\path (a5) edge [loop above, min distance=1cm, in=60, out=120] node {$1$} (a5);
			\path (a6) edge [loop above, min distance=1cm, in=60, out=120] node {$1$} (a6);
			\path (a7) edge [loop above, min distance=1cm, in=60, out=120] node {$1$} (a7);
			\path (a8) edge [loop above, min distance=1cm, in=60, out=120] node {$1$} (a8);
			\path (a9) edge [loop above, min distance=1cm, in=60, out=120] node {$1$} (a9);
		\end{tikzpicture}
		\[
		\begin{pmatrix}
			0 & x_1 & 0 & 0 & 0 & x_3 & 0 & 0 & 0\\
			0 & 1 & 1 & 0 & 0 & 0 & 0 & 0 & 0\\
			0 & 0 & 1 & x_2 & x_3 & 0 & 0 & 0 & 0\\
			0 & 0 & 0 & 1 & 0 & 0 & 0 & 0 & 1\\
			0 & 0 & 0 & 0 & 1 & 0 & 0 & 0 & 1\\
			0 & 0 & 0 & 0 & 0 & 1 & 1 & 0 & 0\\
			0 & 0 & 0 & 0 & 0 & 0 & 1 & x_4 & 0\\
			0 & 0 & 0 & 0 & 0 & 0 & 0 & 1 & 1\\
			1 & 0 & 0 & 0 & 0 & 0 & 0 & 0 & 1\\
		\end{pmatrix}
		\]
		
		\caption{Example of constructing the adjacency matrix with a specific determinant}
		\label{fig:example}
	\end{figure}
	
	\begin{itemize}
		\item For each atomic formula $F'$, create vertices $s$ and $t$.
		\subitem For each constant $c$ create a unique node $v_c$ with a path from $s$ to $v_c$ with weight $c$ and a path from $v_c$ to $t$ with weight $1$.
		\subitem For each variable $x_i$ create a unique node $v_{x_i}$ with a path from $s$ to $v_{x_i}$ with weight $x_i$ and a path from $v_{x_i}$ to $t$ with weight $1$.
		\item For multiplication of $F_i$ and $F_j$, place the graphs of $F_i$ and $F_j$ in series.
		\item For addition of $F_i$ and $F_j$, place the graphs of $F_i$ and $F_j$ in parallel.
		\item Once all of this is done, create a path of weight $1$ from the global $t$ vertex to the global $s$ vertex.
		\item Create self loops at every vertex besides the global $s$ vertex.
	\end{itemize}
	
	Let $M$ be the resulting adjacency matrix of this graph. For every entry that is a constant, we can assign that to $M_0$. Then for each variable $x_i$, we can set $M_i$ to be the matrix that is zero everywhere except where $x_i$ appears in $M$. As an example, for a matrix 
	\begin{align*}
		M(x) = \begin{pmatrix}x_1 & 1\\ x_2 & x_1 \end{pmatrix}
	\end{align*}
	
	we can describe it using $M(x) = M_0 + x_1 M_1 + x_2 M_2$ where
	\begin{align*}
		M_0 = \begin{pmatrix}0 & 1\\ 0 & 0 \end{pmatrix}\,\,\,\,\,
		M_1 = \begin{pmatrix}1 & 0\\ 0 & 1 \end{pmatrix}\,\,\,\,\,
		M_2 = \begin{pmatrix}0 & 0\\ 1 & 0 \end{pmatrix}
	\end{align*}

	For a more succinct reduction later in \cref{sec:pac_np_hard}, we want to isolate the kinds of matrix affine subspaces over $\mathbb{F}_2$ that are hard to solve (i.e., are used in the reduction from 3SAT). The following notation will be beneficial for that.

	\begin{definition}
		For a $n \times n$ matrix $M$ over a field $\mathbb{F}$, let $NZ(M) \subseteq \{1, 2, \cdots, n\}$ be the columns of $M$ that are non-zero (i.e., are not the all zeros vector).
	\end{definition}

	\begin{definition}
		Let $M$ and $W$ be $n \times n$ matrices over a field $\mathbb{F}$ and let $k \coloneqq |NZ(W)|$ such that $NZ(W) = \{i_1, i_2, \cdots i_k\}$. If 
		\[M = 
		\begin{bmatrix}
			\vert & \vert & \vert & \vert\\
			m_{1}& m_{2 }& \dots& m_{n}\\
			\vert & \vert & \vert & \vert\\
		\end{bmatrix}
		\]
		then the restriction of the columns of $M$ to the non-zero columns of $W$ is defined as
		\[
		R_W(M) = \begin{bmatrix}
			\vert & \vert & \vert & \vert\\
			m_{i_1}& m_{i_2 }& \dots& m_{i_{k}}\\
			\vert & \vert & \vert & \vert\\
		\end{bmatrix}.
		\]
		Likewise, for a set of matrices $S$, $R_W(S) = \{R_W(M) \mid M \in S\}$.
	\end{definition}

	We now define the modified version of \textsc{NonSingularity} and show that it is still \np-complete.

	\begin{definition}\label{def:mod_sing}
		Let $M = M_0 + \langle M_1, M_2, \cdots M_m \rangle$ be an matrix affine subspace of $n \times n$ matrices over some field $\mathbb{F}$. In addition, require the $M_i$ have the property that $NZ(M_i) \cap NZ(M_j) = \emptyset$ for all $i > 0$ and $j > 0$. Finally, for all $i > 0$ with $k_i \coloneqq |NZ(M_i)|$, constrain $M_0$ and $M_i$ such that the restriction of $M$ to the non-zero columns of $M_i$, \[R_{M_i}(M) = R_{M_i}(\{M_0, M_0 + M_i\}),\] can be represented as:
		\[
		\mleft\{
		\begin{bmatrix}
			\vert & \vert & \vert & \vert\\
			v_1 & v_2& \dots& v_{k_i}\\
			\vert & \vert & \vert & \vert\\
		\end{bmatrix}, 
		\begin{bmatrix}
			\vert & \vert & \vert & \vert\\
			v_1 + w_1& v_2 + w_2& \dots& v_{k_i} + w_{k_i}\\
			\vert & \vert & \vert & \vert\\
		\end{bmatrix}\mright\}
		\]
		for some $v$ and $w$ such that $v_j \neq w_j$ and $v_j, w_j \neq 0$. The problem of deciding if there exists $\alpha_1, \alpha_2, \cdots, \alpha_m \in \mathbb{F}$ such that $M_0 + \sum_i \alpha_i M_i$ results in a non-singular matrix will be called the \textsc{Modified-NonSingularity} problem.
	\end{definition}
	
	\begin{corollary} \label{cor:disjoint_column}
		The \textsc{Modified-NonSingularity} problem is \np-complete over $\mathbb{F}_2$.
	\end{corollary}
	\begin{proof}
		\sloppy
		Rather than solve \textsc{NonSingularity} using \textsc{Modified-NonSingularity}, we merely show that the reduction from 3SAT naturally leads to instances of \textsc{Modified-NonSingularity}.
		
		Let $G$ be the graph produced by the reduction from 3SAT with adjacency matrix $M(x)$.
		%We first need to show that from the construction of $G$, the resulting $M(x)$ has at each column has at most one variable $x_i$ such that the $M_i$ act on disjoint columns for $i > 0$. Due to the construction of $M(x)$ as a weighted adjacency matrix, this is equivalent to showing that no more than one variable-weighted edge (i.e., an edge with weight $x_j$ rather than a constant) directs into any node.
		By the construction of $M(x)$ given by \cref{fig:table}, we see that every instance of a variable will create its own unique subgraph such that the instance of each variable connects to a unique vertex. Because this unique vertex can never be used as an $s$ or $t$ vertex from \cref{fig:table}, that vertex also necessarily has in-degree $1$ so that no other edges connect to it. Because these are the only vertices that have incoming edges assigned with variable weight, this conversely means that the columns of $M(x)$ contain at most one variable.
		
		Recall that for variable $x_i$ with $i > 0$, the matrix $M_i$ we form from the decomposition of $M(x)$ is the entries associated with $x_i$. Since each column only contains at most one $x_i$, then $NZ(M_i) \cap NZ(M_j)$ must be disjoint for $i > 0$ and $j > 0$. Due to the constraint being met, we know that the $R_{M_i}(M)$ is confined to a one-dimensional affine subspace.
		As such, it can be represented as:
		\[
		\mleft\{
		\begin{bmatrix}
			\vert & \vert & \vert & \vert\\
			v_1 & v_2& \dots& v_{k_i}\\
			\vert & \vert & \vert & \vert\\
		\end{bmatrix}, 
		\begin{bmatrix}
			\vert & \vert & \vert & \vert\\
			v_1 + w_1& v_2 + w_2& \dots& v_{k_i} + w_{k_i}\\
			\vert & \vert & \vert & \vert\\
		\end{bmatrix}\mright\}
		\]
		for some $v$ and $w$, where the $v$ is a sub-matrix of $M_0$ and $w$ is the corresponding sub-matrix for $M_i$.
		Due to the definition of $NZ(M_i)$ it is clear that the $w_j$ are non-zero.
		Furthermore, since the weight of an edge will never be $x_i + 1$ then an entry of $w_j$ being one implies the corresponding entry of $v_j$ is zero. This implies $v_j \neq w_j$ for all $j$.
		Finally, to ensure that $v_j \neq 0$, we note that each vertex besides $s$ receives a self loop with weight $1$. $s$ instead receives an edge from $t$ with weight $1$. These self loops and the edge from $t$ to $s$ ensures that each column of $M_0$ has at least entry with $1$ in it such that the $v_j \neq 0$.
		
		We have now shown that every matrix affine subspace produced from the reduction in \citet{BUSS1999572} also meets the requirements for \textsc{Modified-NonSingularity}, thus showing that this problem is also \np-complete.
	\end{proof}
	
	\section{PAC Learning CNOT Circuits and \np}\label{sec:pac_np_hard}
	
	\begin{lemma}\label{lemma:cnot_np_hard}
		The decision problem, $\textsc{CNOTDecide}(2n^2)$, of deciding whether or not there exists a CNOT circuit consistent with at most $2n^2$ samples is \np-complete.
	\end{lemma}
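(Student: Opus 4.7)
The plan is to reduce from \textsc{Simplified NonSingularity} over $\mathbb{F}_2$ (\cref{cor:disjoint_column}), exploiting the fact that \cref{cor:create_sol} lets us build exactly the kind of one-dimensional column-affine constraints that arise from that problem. Membership of $\textsc{CNOTDecide}(\poly(n))$ in $\np$ was already noted as a consequence of \cref{prop:in_np} together with the verifiability of the CNOT constraints, so only hardness requires new work.

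Given an instance $M_0, M_1, \dots, M_k$ of \textsc{Simplified NonSingularity} on $n \times n$ matrices, I will construct samples for an $n$-qubit CNOT-learning instance such that the target $n \times n$ matrix $\Theta$ (from the parametrization in \cref{prop:sim_cnot}) is forced to equal $M_0 + \sum_i \alpha_i M_i$ for some assignment $\alpha \in \{0,1\}^k$. For each $M_i$ with $i \geq 1$, by \cref{cor:disjoint_column} its nonzero columns form a disjoint block where the restriction of $M_0 + \Span(M_i)$ is a one-dimensional affine subspace $v + \Span(w)$ with $v_j \neq w_j$ and $v_j, w_j \neq 0$; applying \cref{cor:create_sol} to this block uses $k_i n + k_i - 1$ samples to force that sub-block of $\Theta$ to lie in exactly this affine set. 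For each column not touched by any $M_i$ (purely constant in $M_0$), I will pin its value to the corresponding column of $M_0$ using the $0$-dimensional variant of \cref{lemma:create_sol_basic} mentioned in the remark immediately following it, at a cost of $n+1$ samples per such column. Because these constraints together fix $\Theta$ to a solution of \textsc{Simplified NonSingularity}, a full-rank $\Theta$ (hence a legitimate CNOT circuit by \cref{prop:sim_cnot}) consistent with the samples exists if and only if the original instance is a YES instance.

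For the sample budget, let $v = \sum_i k_i$ be the number of variable columns, $g$ the number of distinct variables, and $m = n - v$ the number of purely constant columns. The total sample count is
\[
\sum_{i=1}^g (k_i n + k_i - 1) + (n+1)m \;=\; nv + v - g + (n+1)(n-v) \;=\; n^2 + n - g \;\leq\; (n+1)n,
\]
matching the bound in the lemma statement. Each sample is efficiently constructible per \cref{cor:create_sol} and \cref{lemma:create_sol_basic}, so the reduction runs in polynomial time. Signs ($q_j$ values) can be set freely after the fact (or taken all to be $0$) since the samples above only reference positive-sign $Z$-type Paulis and the $-Z^v$ elimination samples already pin the relevant sign information; the variable affine subspace structure constrains only the $\theta$-part of the encoding.

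The main obstacle is the bookkeeping: I need the constraints produced by \cref{cor:create_sol} to align exactly with the block structure guaranteed by \cref{cor:disjoint_column} (disjoint variable columns, each block's allowed values forming a genuine affine line with the required $v_j, w_j \neq 0, v_j \neq w_j$ condition), and the sample count must be squeezed into $(n+1)n$, which forces me to share the factor of $(n+1)$ between variable blocks and constant columns rather than over-counting. Once this accounting is done carefully, \np-hardness follows from \cref{cor:disjoint_column}, and combining with $\np$-membership yields \np-completeness.
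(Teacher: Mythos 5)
Your proposal matches the paper's proof essentially step for step: reduce from \textsc{Simplified NonSingularity} (\cref{cor:disjoint_column}), use \cref{cor:create_sol} for each disjoint variable block, pin the constant columns via the $0$-dimensional variant of \cref{lemma:create_sol_basic}, invoke \cref{prop:sim_cnot} to realize any full-rank $\Theta$ as a CNOT circuit, and combine with \cref{prop:in_np} for membership. Your explicit accounting showing the total sample count is $n^2 + n - g \leq (n+1)n$ is a welcome addition of detail that the paper elides.
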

	\begin{proof}
		By \cref{cor:disjoint_column} and the Cook-Levin theorem \citep{kleinberg2022algorithm}, if a problem can be used to solve \textsc{Modified-NonSingularity} then it is \np-hard. Looking at each individual $M_i$ for $i > 0$ from \cref{def:mod_sing}, they are all non-zero on disjoint columns. If we restrict to the non-zero columns of a particular $M_i$ for $i > 0$, we find that $R_{M_i}(M)$ meet the requirements for \cref{cor:create_sol}. Since the $M_i$ act on disjoint columns, if we apply \cref{cor:create_sol} for each $R_{M_i}(M_i)$ then we have efficiently created a set of samples $S_i$ that restricts those columns of $\Theta$, the matrix form of the $\theta_{ij}$ values, to lie in $R_{M_i}\mleft(\{M_0, M_0 + M_i \}\mright)$. To fix the columns not touched by the $M_i$, it is not hard to show that the ideas of \cref{lemma:create_sol_basic} can also create a 0-dimensional affine space over these columns, thereby setting the whole matrix to lie in $M_0 + \langle M_1, M_2, \cdots M_m \rangle$. Let such samples be called $T$. Altogether, by taking $S = T \cup \mleft(\bigcup_i S_i\mright)$ we are able to use \cref{lemma:create_sol_basic} and \cref{cor:create_sol} to set $\Theta$ to lie in $M(x)$ from \textsc{Modified-NonSingularity}.
		
		If $M(x)$ is an accepting instance of \textsc{Modified-Singularity} then there must be a full rank $\Theta$ that is consistent with $S$. Since \cref{lemma:theta_rowsum} ensures us that CNOT circuits can instantiate any full rank $\Theta$, there must also exist a CNOT circuit consistent with $S$. Alternatively, if $M(x)$ does not contain a non-singular matrix, then there does not exist a full rank theta that is consistent with $S$. By \cref{lemma:theta_full_rank} there cannot exist a CNOT circuit that is consistent with the data. This gives us that $M(x)$ contains a non-singular matrix if and only if there is a CNOT circuit consistent $S$, which can be produced efficiently.
		
		We now count the number of samples used. For each all $1 \leq i \leq m$, we use $n$ samples to constrain the first column of $NZ(M_i)$. For the remaining columns, we either use $2n$ if that column is contained in some $NZ(M_j)$, otherwise, we use $n+1$ samples from the generalization of \cref{lemma:create_sol_basic}. Since $2n >= n+1$ for $n \geq 1$, we use at most $2n (n-m)$ samples for the remainder, giving us at most $2n(n-m) + nm = 2n^2 - mn \leq 2n^2$ samples.
		This shows $\textsc{CNOTDecide}(2n^2)$ is \np-hard. Combined with \cref{prop:in_np} we find that $\textsc{CNOTDecide}(2n^2)$ to be \np-complete.
	\end{proof}
	
	Since $\textsc{CNOTDecide}(2n^2) \subset \textsc{CNOTDecide}(\poly(n))$ then  $\textsc{CNOTDecide}(\poly(n))$ is also \np-hard.
	
	\begin{theorem}[Formal Statement of \cref{thm:main}]\label{thm:main_formal}
		There exists an efficient randomized $(\epsilon, \delta)$ proper PAC learner for CNOT circuits with arbitrary $\epsilon$ and $\delta$ as arbitrary $\frac{1}{\poly(n)}$ values if and only if $\textsf{RP} = \np$. Furthermore, an efficient quantum $(\epsilon, \delta)$ proper PAC learner for CNOT circuits with arbitrary $\epsilon$ and $\delta$ as arbitrary $\frac{1}{\poly(n)}$ values exists if and only if $\np \subseteq \textsf{RQP}$.
	\end{theorem}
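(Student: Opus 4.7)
The plan is to compose the \np-completeness of $\textsc{CNOTDecide}$ from \cref{lemma:cnot_np_hard} with the generic learner/decision equivalences in \cref{prop:decision2learning} and \cref{cor:learning2decision}. All of the circuit-construction and sample-generation work from \cref{sec:gen_constraint,sec:pac_np_hard} has already been absorbed into the \np-hardness, so what remains is essentially a parameter-setting exercise. The classical and quantum statements have the same structure with $\textsf{RP}$ replaced by $\textsf{RQP}$, so I argue the classical case in detail and then indicate the changes for the quantum version.

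For the forward direction assume an efficient randomized proper PAC learner achieving arbitrary $1/\poly(n)$ error $\epsilon$ and confidence $\delta$. By \cref{lemma:pauli_measure_trace} every Pauli-measurement label on a stabilizer input lies in $\{0, 1/2, 1\}$, so the minimum nonzero gap between two labels is $\alpha = 1/2$. Taking the sample budget $s = (n+1)n$ from \cref{lemma:cnot_np_hard}, I tune the learner to $\epsilon < \alpha^2/s^2 = 1/(4s^2)$ and $\delta < 1/2 + 1/(2s)$; both remain $1/\poly(n)$ and are therefore attainable. \cref{prop:decision2learning} then places $\textsc{CNOTDecide}((n+1)n)$ in $\textsf{RP}$, and since \cref{lemma:cnot_np_hard} shows this problem is \np-complete, we conclude $\np \subseteq \textsf{RP}$, i.e.\ $\textsf{RP} = \np$.

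For the backward direction assume $\textsf{RP} = \np$. By \cref{lemma:clifford_circuit_size} CNOT circuits satisfy the size/depth hypothesis of \cref{cor:learning2decision} with $d = 2$, $\Gamma = O(n^2)$, $\Delta = O(\log n)$, and the \np-completeness hypothesis is supplied by \cref{lemma:cnot_np_hard}. Applying \cref{cor:learning2decision} yields an efficient $((1-\epsilon)\beta^2 + \epsilon, \delta/c)$ proper learner, and choosing the internal $\beta$ and $\epsilon$ to be small enough $1/\poly(n)$ values makes the combined error $(1-\epsilon)\beta^2 + \epsilon$ and the confidence $\delta/c$ any prescribed $1/\poly(n)$ targets, since the sample complexity there depends only polynomially on $1/\epsilon$, $\log(1/\beta)$, and $\log(1/\delta)$.

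The quantum statement follows by the identical chain of implications with $\textsf{RP}$ replaced throughout by $\textsf{RQP}$: the consistency check inside \cref{prop:decision2learning} is efficient classically by Gottesman--Knill and hence a fortiori efficient for a quantum learner, and the search-to-decision reduction used inside \cref{cor:learning2decision} is purely combinatorial, so it is oblivious to whether the decision oracle is classical or quantum. The main obstacle, such as it is, is threading the $1/\poly(n)$ requirements on $\epsilon$ and $\delta$ cleanly through the $s^2$ shrinkage in \cref{prop:decision2learning} and the generalization overhead in \cref{cor:learning2decision}; since every blow-up is itself polynomial in $n$, the target regime is preserved.
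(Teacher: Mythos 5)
Your proof matches the paper's almost verbatim: both directions compose the \np-completeness of $\textsc{CNOTDecide}((n+1)n)$ from \cref{lemma:cnot_np_hard} with the learner/decision equivalences of \cref{prop:decision2learning} and \cref{cor:learning2decision}, and the quantum case is the same chain with $\textsf{RP}$ replaced by $\textsf{RQP}$. The one small divergence is the choice of $\alpha$ in the forward direction: you take $\alpha = 1/2$, the minimum nonzero gap over arbitrary Pauli measurements, while the paper observes that the hard instances of \cref{lemma:cnot_np_hard} only involve computational-basis inputs and $\{I,Z\}^{\otimes n}$ measurements, on which CNOT circuits always output $0$ or $1$, and therefore takes $\alpha = 1$; both thresholds are $1/\poly(n)$, so the conclusion is unaffected, and your choice is if anything the safer one if one does not want to restrict the measurement domain. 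In the backward direction, your parameterization (take both $\beta$ and the internal generalization $\epsilon$ to be small $1/\poly(n)$ quantities, relying on the fact that $\beta$ enters the sample complexity only inside a $\log^2$) is actually cleaner than the paper's choice of $\beta = 1/2$, $\epsilon = (4\epsilon'-1)/3$, which becomes negative once $\epsilon' < 1/4$.
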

	\begin{proof}
		We will start by proving the $\np \subseteq \textsf{RP}$ version for classical randomized learners. The quantum version will follow trivially by replacing the learner with a quantum algorithm and therefore $\textsf{RP}$ with $\textsf{RQP}$. The only change then is that $\np \subseteq \textsf{RQP}$ does not necessary imply $\np = \textsf{RQP}$ like it does with \textsf{RP}.
		
		\sloppy Because CNOT circuits with classical inputs and measurements only has labels $0$ and $1$, the smallest non-zero error is $1$. By \cref{prop:decision2learning} with $\alpha = 1$, an efficient
		\[\mleft(\epsilon < \frac{1}{4n^4}, \delta < \frac{1}{2} + \frac{1}{4n^2}\mright)\footnote{We again abuse notation to signify that $\epsilon$ is a value less than $\frac{1}{4n^4}$ and likewise for $\delta < \frac{1}{2} + \frac{1}{4n^2}$.}\]
		randomized proper learner for CNOT circuits will imply $\textsc{CNOTDecide}(2n^2) \in \textsf{RP}$. Since $\textsc{CNOTDecide}(2n^2)$ is \np-complete by \cref{lemma:cnot_np_hard}, efficient randomized learners for CNOT circuits with arbitrary $\epsilon = \frac{1}{\poly(n)}$ and $\delta = \frac{1}{\poly(n)}$ are only possible if $\np \subseteq \textsf{RP}$.
		
		Conversely, by \cref{lem:clifford_fat_dim_finite} and \cref{lem:learning2decision} with $\beta = \frac{1}{2}$, $c = \poly(n)$, $\epsilon = \frac{4\epsilon'-1}{3}$, and $\delta = \delta' - 2^{-c}$ if $\np \subseteq \textsf{RP}$ then there must exist an efficient $(\epsilon', \delta')$ proper learner for CNOT circuits as long as the number of samples $m$ is polynomial in $n$. For arbitrary $\epsilon' = \frac{1}{\poly(n)}$ and $\delta' = \frac{1}{\poly(n)}$ our required number of samples becomes
		\[
		m = \Theta\mleft(\frac{1}{\epsilon}\mleft(fat_\mathcal{C}\mleft(\frac{\beta}{8}\mright)\log^2\mleft(\frac{fat_\mathcal{C}\mleft(\frac{\beta}{8}\mright)}{\beta\epsilon}\mright) + \log \frac{1}{\delta}\mright)\mright) = \mathcal{O}(\poly(n))
		\]
		This is sufficiently small as desired, completing the proof.
	\end{proof}
	
	\begin{corollary}[Formal Statement of \cref{cor:main}]
		There exists an efficient randomized $(\epsilon, \delta)$ proper PAC learner for Clifford circuits with arbitrary $\epsilon$ and $\delta$ as arbitrary $\frac{1}{\poly(n)}$ values if and only if $\textsf{RP} = \np$. Furthermore, an efficient quantum $(\epsilon, \delta)$ proper PAC learner for Clifford circuits with arbitrary $\epsilon$ and $\delta$ as arbitrary $\frac{1}{\poly(n)}$ values exists if and only if $\np \subseteq \textsf{RQP}$.
	\end{corollary}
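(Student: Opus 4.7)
The plan is to follow the same two-step structure as the proof of \cref{thm:main_formal}, with ``CNOT'' replaced by ``Clifford'' throughout, and to bridge the gap by establishing that $\textsc{CliffordDecide}(\poly(n))$ is $\np$-complete. The only substantive new work lies in transferring the hardness reduction from CNOT to Clifford.

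First I would show $\textsc{CliffordDecide}(\poly(n))$ is $\np$-complete. Containment in $\np$ is exactly \cref{prop:in_np}. For $\np$-hardness I claim that the same sample set $S$ constructed in the proof of \cref{lemma:cnot_np_hard} has a consistent Clifford circuit if and only if it has a consistent CNOT circuit. The forward direction is immediate, since CNOT circuits are Clifford circuits. For the converse, suppose $C$ is a Clifford circuit consistent with $S$. Every sample in $S$ has the form $\mleft(\rho, \frac{I+P}{2}, b\mright)$ where $\rho$ is a $Z$-type stabilizer state, $P$ is a $Z$-type Pauli, and $b \in \{0, 1\}$; hence by \cref{lemma:pauli_measure_trace} each corresponding $C^\dagger P C$ lies in the stabilizer group of $\rho$ or its negation, and in particular is itself a $Z$-type Pauli. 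Since the samples of \cref{lemma:cnot_np_hard} pin down every column of $\Theta$, this rules out any $X$-component in $C^\dagger Z_j C$, i.e.\ it forces $\gamma_{ij} = 0$ for all $i, j$. The symplectic condition then makes $\Theta$ invertible exactly as in the proof of \cref{prop:sim_cnot}, so $C$ is a classical Clifford, and the ``moreover'' clause of \cref{prop:sim_cnot} produces a CNOT circuit $C'$ with identical action on the samples, which is therefore also consistent with $S$.

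With $\textsc{CliffordDecide}(\poly(n))$ in hand as $\np$-complete, the two directions of the corollary now mirror \cref{thm:main_formal}. For the hardness direction, an efficient randomized $(\epsilon, \delta)$ proper PAC learner for Clifford circuits with $\epsilon < \frac{1}{s^2}$, $\delta < \frac{1}{2} + \frac{1}{2s}$, and $s = \poly(n)$ gives, via \cref{prop:decision2learning}, an $\textsf{RP}$ algorithm for $\textsc{CliffordDecide}(\poly(n))$, hence $\np \subseteq \textsf{RP}$, hence $\textsf{RP} = \np$. Conversely, assuming $\textsf{RP} = \np$, I would invoke \cref{cor:learning2decision} with $d = 2$, $\Delta = \mathcal{O}(\log n)$ and $\Gamma = \mathcal{O}(n^2)$ taken from \cref{lemma:clifford_circuit_size}, and $\beta = \tfrac{1}{2}$ justified by \cref{lemma:pauli_measure_trace}; the resulting sample complexity $m$ remains $\poly(n)$ for any $\frac{1}{\poly(n)}$ choice of $\epsilon$ and $\delta$, yielding an efficient proper PAC learner for Clifford circuits. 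The quantum statement follows by substituting $\textsf{RQP}$ for $\textsf{RP}$ throughout, exactly as in \cref{thm:main_formal}.

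The main obstacle is the consistent-Clifford-to-consistent-CNOT reduction in the second paragraph: one must confirm that the samples of \cref{lemma:cnot_np_hard} genuinely constrain the full $\Theta$ matrix, so that the $0/1$ labels force $\gamma_{ij} = 0$ everywhere and \cref{prop:sim_cnot} can then trade a symplectic classical Clifford for a CNOT circuit with matching behaviour. Once this correspondence is established, the rest of the argument is essentially a verbatim restatement of the proof of \cref{thm:main_formal}.
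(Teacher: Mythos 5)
Your proposal follows the same two-part structure as the paper's proof and arrives at the same conclusion, but it does something the paper skips: the paper disposes of the $\np$-hardness transfer in one line, writing ``$\textsc{CNOTDecide}((n+1)n) \subset \textsc{CliffordDecide}((n+1)n)$, and so $\textsc{CliffordDecide}((n+1)n)$ is also \np-Complete.'' On its face this containment-of-YES-instances does \emph{not} give hardness, since a sample set with no consistent CNOT circuit could in principle still admit a consistent Clifford circuit. Your second paragraph supplies exactly the missing argument: on the specific samples constructed in \cref{lemma:cnot_np_hard}, every state is $Z$-type, every measurement is $\frac{I+P}{2}$ for $Z$-type $P$, and every label is $0$ or $1$, so \cref{lemma:pauli_measure_trace} forces $C^\dagger P C$ to be a signed $Z$-type Pauli; since every column of $\Theta$ is directly constrained by such a sample, this kills all $\gamma_{ij}$, and \cref{prop:sim_cnot} then produces a CNOT circuit with identical action on $Z$-type measurements. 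This is correct, and it is a necessary elaboration rather than mere pedantry. (One small care-point worth writing down: the CNOT circuit produced by the ``moreover'' clause of \cref{prop:sim_cnot} keeps the same $\Theta$ and $q_j$ but drops the $Z$-gates, i.e.\ sets $p_j=0$; since $Z$-type measurement outcomes depend only on $\Theta$ and $q_j$, this does not change the labels.)

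One constant is off in your hardness direction. You invoke \cref{prop:decision2learning} with $\epsilon < \frac{1}{s^2}$, which corresponds to $\alpha = 1$; but for Clifford circuits the possible labels are $\{0, \tfrac{1}{2}, 1\}$, so the minimum non-zero error a hypothesis can make is $\alpha = \tfrac{1}{2}$ and the threshold should be $\epsilon < \frac{1}{4s^2}$. The paper flags precisely this change (``with $\alpha = \frac{1}{2}$ instead due to \cref{lemma:pauli_measure_trace}''). Since the corollary's statement quantifies over arbitrary $\frac{1}{\poly(n)}$ choices of $\epsilon$, the factor of $4$ does not affect the conclusion, but you should state the correct threshold. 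Everything else, including the appeal to \cref{cor:learning2decision} with the parameters from \cref{lemma:clifford_circuit_size} for the converse direction and the $\textsf{RQP}$ substitution for the quantum case, mirrors the paper's argument and is fine.
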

	\begin{proof}
		\sloppy Since CNOT circuits are also a form of Clifford circuits, $\textsc{CNOTDecide}(2n^2) \subset \textsc{CliffordDecide}(2n^2)$, and so $\textsc{CliffordDecide}(2n^2)$ is also \np-hard. Combined with \cref{prop:in_np}, we get that it is \np-complete. The proof of \cref{thm:main_formal} continues but with $\alpha = \frac{1}{2}$ instead due to \cref{lemma:pauli_measure_trace}. This leads to slightly different constants, but the proof ideas all follow without major change.
	\end{proof}
	
	\section{Discussion and Open Problems}
	In this work, we prove a negative result in proper learning of one of the best candidates for efficient PAC learning of quantum circuits. However, it should be noted that in many cases there exist improper learners even in the case where proper learning is \np-hard, such as 2-clause CNF, 3-DNF, and intersection of half spaces\citep{blum2015dp, BLUM1992117, Haghtalab2020}. This immediately leaves the problem of whether or not an improper learner exists for Clifford circuits. One way of showing hardness would be to leverage cryptographic hardness such as in \citet{kharitonov_1993, arunachalam_2021}. Another approach would be to assume the hardness of random k-DNF, such as the work of \citet{daniely16}. For upper bounds, the work of \citet{2020Caro} can also be used to get agnostic generalization results, providing a possible pathway to answering research questions in that direction.
	
	Another thing worth considering is that the hardness results only apply for small errors (roughly $1/\poly(n)$). And while this is sufficient to give complexity-theoretic hardness for the kinds of PAC learners (i.e., strong proper learners) originally introduced by \citet{valiant1984theory}, it would be nice to get hardness results for larger errors as in the work by \citet{venkatesan_2009}. This work involved using PCP/hardness-of-approximation ideas to show that even constant training error was \np-hard.
	
	We also note that a single output bit of a CNOT circuit is simply an XOR, which is easy to learn efficiently if there is no noise. However, because we are dealing with reversible computation each output bit has to be a linearly independent XOR such that each input bit is recoverable. Finding a CNOT circuit that matches a single XOR function $f$ can be done by sampling expected $\mathcal{O}(n)$ random XOR until we get $n$ linearly independent XOR with one of them being $f$ (see \cref{ssec:single_measurement}). Thus, the entire difficulty of proper learning CNOT circuits is this linear independence of the output bits. As such, even though $\textsf{AC}^0 \subseteq \textsf{TC}^0 \subseteq \textsf{NC}^1 \subseteq \textsf{L} \subseteq \oplus \textsf{L}$ with the lower classes having improper hardness results based on cryptographic hardness \citep{kharitonov_1993}, one cannot directly give an improper learning result for CNOT circuits despite the fact that simulating CNOT circuits is complete for $\oplus \textsf{L}$.
	
	As noted previously, our PAC learning framework is slightly different from that of \citet{2020Caro}, in that we use Pauli matrices, rather than rank $1$ projectors as measurements. To the author's knowledge, there exists no proof showing that one framework is necessarily harder than the other. The author also do not see an obvious way of proving an analogous hardness theorem in the specific framework of \citet{2020Caro} for Clifford or CNOT circuits.
	
	Finally, with everything from the input states to the circuits involved being classical, it is entirely possible to prove the technical results about CNOT circuits only talking about bit strings and parity functions. Namely, one can replace the entire problem with samples of the form $(x, s, s^T Cx \mod 2)$ where $(x, s) \sim \mathcal{D}$ are $n$-bit strings and $C$ is a CNOT circuit. Since the stabilizer group of a computational basis state always lies in $\{\pm 1\}\times \{I, Z\}^{\otimes n}$, we can uniquely define it by the subgroup that has positive phase. This is equivalent to the orthogonal complement of $x$, which is the subspace $W_x = \mleft\{x \in \{0, 1\}^n \bigm| w \cdot x = 0 \mright\}$. From there, a sample of the form $(x, s, 0)$ simply says that $Cx \in W_s$, and one can get an analogous proof by copying the lemmas and theorems in \cref{sec:gen_constraint} and \cref{sec:pac_np_hard}. However, this proof isn't anymore intuitive than the one given using stabilizer groups, and in fact is probably less intuitive to the average reader due to the lack of established formalism from stabilizers and paulis. It would be interesting if a more intuitive purely classical proof could be made to show hardness of learning CNOT circuits under this model.
	
	\section*{Acknowledgements}
	
	The author would like to thank Aravind Gollakota for being a wealth of information with regard to statistical learning theory and William Kretschmer for insightful commentary and feedback. The author would also like to thank Patrick Rall, Justin Yirka, Sabee Grewal, Matthias Caro, and Scott Aaronson for useful discussions. The author was supported by Scott Aaronson's Paths to Quantum Supremacy award by the Office of Naval Research and Scott Aaronson's Simons Foundation It From Qubit award.
	
	\bibliography{../refs} 
	\bibliographystyle{plainnat}
	
	\appendix
	\newpage	
	
		\section{Generalization of \texorpdfstring{\citet{2020Caro}}{[Caro and Datta 2020]}}\label{sec:pseudo}
	We generalize the main results of \citet{2020Caro}, which is itself analogous to \citet{Goldberg1995}, to allow for projective measurements beyond rank 1, such as in the settings used in this paper. While not necessary for the main result of this paper, we hope to give this proof in a way that is also more black-box accessible for future work.
	
	\subsection{Quantum Circuits as Polynomials}
	
	The end goal will be to show that the outputs of our concept class can be described as a set of polynomials with bounded degree. Combined with an upper bound on the number of polynomials in that set, we can later arrive at an upper bound on the pseudo-dimension, which itself is an upper-bound on fat-shattering dimension.
	
	We now show a more terse version of Lemma 3 from \citet{2020Caro}.
	
	\begin{lemma}\label{lem:fixed_circuit_poly}
		Consider a quantum circuit $C$ with a fixed circuit structure (i.e., the location of the $2$-qudit gate are in fixed positions, though the gates themselves can be arbitrary) comprised of at most $\Gamma$ $2$-qudit gates. Such circuit can be described using variables $c_1, c_2, \cdots, c_k \in \mathbb{R}$ such that $k = 2\gamma d^4$. Then for every pair of quantum states $\ket{\psi}$ and $\ket{\phi}$ there exists a polynomial $p_{\mleft(\ket{\psi}, \ket{\phi}\mright)}(c_1, c_2, \cdots, c_k) = \trc\mleft[ C\ketbra{\psi}{\psi}C^\dagger \ketbra{\phi}{\phi} \mright]$ with degree at most $2 \Gamma$.
	\end{lemma}
	\begin{proof}
		Every $2$-qudit gate $U$ can be na\"ively expressed as the $d^4$ complex values that make up the $d^2 \times d^2$ unitary. By splitting up the complex values into a real and imaginary part, we get $2d^4$ real values to describe each $2$-qudit gate. If $\ket{\psi} = \sum_{s \in \{0, 1\}^n} \alpha_s \ket{s}$ then applying a $2$-qudit unitary to $\ket{\psi}$ leaves us with the amplitudes of this new state being a polynomial of degree $1$ in terms of the entries of $U$. Note that the $\alpha_i$, along with the circuit structure, are what determine the coefficients of this polynomial. By applying all $\gamma$ $2$-qudit gates that comprise $C$, the amplitudes of $C \ket{\psi}$ can be described as a polynomial of degree $\Gamma$ in $2\gamma d^4$ variables. Finally, since we can write $\ket{\phi} =  \sum_{s \in \{0, 1\}^n} \beta_s \ket{s}$, then the inner product $\braket{\phi| C |\psi}$ is some weighted linear combination of the amplitudes of $C \ket{\psi}$, which is a again polynomial with degree at most $\Gamma $. To get
		\[
		\trc\mleft[ C\ketbra{\psi}{\psi}C^\dagger \ketbra{\phi}{\phi} \mright] = \abs{\braket{\phi | C | \psi}}^2
		\]
		we note that the degree at most doubles when we multiply a polynomial by itself. This leaves us with $p_{\mleft(\ket{\psi}, \ket{\phi}\mright)}(c_1, c_2, \cdots, c_k)$ as polynomial of degree at most $2 \Gamma $ and $m = 2 \gamma d^4$.
	\end{proof}
	
	\begin{corollary}\label{cor:fixed_circuit_poly}
		Consider a quantum circuit $C$ with a fixed circuit structure (i.e., the location of the $2$-qudit gate are in fixed positions, though the gates themselves can be arbitrary) comprised of at most $\gamma$ $2$-qudit gates. Such circuit can be described using variables $c_1, c_2, \cdots, c_k \in \mathbb{R}$ such that $k = 2\gamma d^4$. Then for every pair of quantum state $\ket{\psi}$ and projector $\Pi$ there exists a polynomial $p_{\mleft(\ket{\psi}, \Pi\mright)}(c_1, c_2, \cdots, c_k) \coloneqq \trc\mleft[ C\ketbra{\psi}{\psi}C^\dagger \Pi \mright]$ with degree at most $2 \gamma$.
	\end{corollary}
	\begin{proof}
		We note that $\Pi = \sum_i \ketbra{\phi_i}{\phi_i}$. By linearity, of the trace
		\begin{align*}
			p_{\mleft(\ket{\psi}, \Pi\mright)}(c_1, c_2, \cdots, c_k)
			&= \trc\mleft[ C\ketbra{\psi}{\psi}C^\dagger \Pi \mright]\\
			&= \trc\mleft[ C\ketbra{\psi}{\psi}C^\dagger \sum_i \ketbra{\phi_i}{\phi_i} \mright]\\
			&= \sum_i \trc\mleft[ C\ketbra{\psi}{\psi}C^\dagger \ketbra{\phi_i}{\phi_i} \mright]\\
			&= \sum_i p_{\mleft(\ket{\psi}, \ket{\phi_i}\mright)}(c_1, c_2, \cdots, c_km).
		\end{align*}
		By \cref{lem:fixed_circuit_poly}, this is the sum of real polynomials in $2\Gamma d^4$ variables with degree at most $2 \gamma$. Since the sum does not increase the degree, we are done. 
	\end{proof}
	
	Since we fixed the circuit structure, we will want to know how many circuit structures there are, because this directly bounds the number of polynomials we need to consider. The following result was the main ingredient in the proof of Lemma 2 from \citet{2020Caro}.
	
	\begin{lemma}[\citet{2020Caro} Lemma 2]\label{lemma:circuit_structures_num}
		There are at most $\frac{\gamma! \delta^{\gamma - \delta}}{(\gamma - \delta)!}(n!)^\delta$ ways to structure $2$-qudit circuits with size $\gamma$ and depth $\delta$.
	\end{lemma}
	
	\subsection{Pseudo-Dimension of Concept Classes Described via Polynomials}
	
	The following is a generalization of \citet{Goldberg1995}, which used the degree of polynomials to bound the pseudo-dimension of concept classes that could be defined using polynomials in the parameter space of the concepts.
	
	\begin{definition}\label{def:pseudodim}
		The \textit{pseudo-dimension} of a concept class $\mathcal{C}$ is the limit of the fat-shattering dimension parameter $\eta$ as $\eta$ goes to zero. Formally, the pseudo-dimension is $\lim_{\eta \rightarrow 0^+} fat_\mathcal{C}(\eta)$.
	\end{definition}
	
	Because fat-shattering dimension increases as $\eta$ decreases, fat-shattering dimension is always upper-bounded by the pseudo-dimension for all values of $\eta > 0$.
	
	\begin{definition}
		Let $\{p_1, p_2, \cdots, p_m\} \subseteq \mathbb{R}^k \rightarrow \mathbb{R}$ be a set of $m$ polynomials on $k$ variables. For $\eta > 0$, the $\eta$-\textit{sign assignment} of $\{p_1, p_2, \cdots, p_m\}$ on the input $\mleft( x_1, x_2, \cdots, x_k \mright) \in \mathbb{R}^k$ is the vector $b \in \{-1, 0, 1\}^m$ such that
		\[
		b_i = \begin{cases}
			1 & p_i(x_1, x_2, \cdots, x_k) \geq \eta\\
			-1 & p_i(x_1, x_2, \cdots, x_k) \leq -\eta\\
			0 & \text{otherwise}
		\end{cases}.
		\]
	\end{definition}
	
	\begin{lemma}[\citet{Goldberg1995} Corollary 2.1]\label{lem:consistent_sign}
		Let $\{p_1, p_2, \cdots, p_m\} \subseteq \mathbb{R}^k \rightarrow \mathbb{R}$ be a set of real polynomials in $k$ variables with $m \geq k$, each of degree at most $d \geq 1$. Then the number of unique $\eta$-sign assignments that $\{p_1, p_2, \cdots, p_m\}$ can create over all inputs in $\mathbb{R}^k$ is at most $\mleft(\frac{8edm}{k}\mright)^k$ in the limit as $\eta \rightarrow 0^+$.
	\end{lemma}

	\cref{lem:consistent_sign} is a stronger notion than pseudo-dimension, since it upper bounds the number of sign assignments over arbitrarily large sets of inputs.
	Since pseudo-dimension requires that $\mathcal{C}$ can achieve all (i.e., an exponential number of) sign assignments on some large set of samples, we can show that the pseudo-dimension cannot be too large.
	We formalize that notion here.
	Note that the polynomials in question in the following proof are over the parameters of the concept class, not the inputs.
	The intuition is that if the output of the concept is some bounded-degree polynomial in the parameter space, there cannot be too many sign assignments.
	
	\begin{theorem}[Generalization of \citet{Goldberg1995} Theorem 2.2]\label{thm:psuedo_dim_poly}
		Let $\mathcal{C} \subseteq \Omega \rightarrow [0, 1]$ be a concept class such that every element of $\mathcal{C}$ can be described via $k$ different real variables $c_1, c_2, \cdots c_k \in \mathbb{R}$, as well as an index $l \in [s]$ for $s \geq 0$. Furthermore, for every $f_{c_1, c_2, \cdots c_k, l} \in \mathcal{C}$ and $x \in \Omega$, let $f_{c_1, c_2, \cdots c_k, l}(x) = p_{x, l}(c_1, c_2, \cdots c_k)$ where $p_{x, l}$ is one of $s$ polynomials each with degree at most $d$ for $d \geq 1$. Then the pseudo-dimension of $\mathcal{C}$ is at most $2k \log_2 \mleft(8 eds\mright)$.
	\end{theorem}
	\begin{proof}
		Let $(x_1, y_1), (x_2, y_2), \cdots, (x_m, y_m)\ \subseteq \mathbb{R}$ be the largest set of points pseudo-shattered by $\mathcal{C}$.
		If $ms < k$, then there is no issue because the largest shattered set is smaller than $k$, which is smaller than $2k \log_2 \mleft(8 eds\mright)$.
		Now assume that $ms \geq k$.
		By \cref{def:pseudodim}, there must exist some points $y_1, y_2, \cdots, y_m \in \mathbb{R}$ and some (potentially arbitrarily small) value $\eta > 0$ such that for all $b_i \in \{\pm 1\}^m$, there is a $f_{c_1, c_2, \cdots c_k} \in \mathcal{C}$ with $b_i \cdot \mleft(f_{c_1, c_2, \cdots c_k}(x_i) - y_i \mright) \geq \eta$.
		Because $f_{c_1, c_2, \cdots c_k, l}(x_i) = p_{x_i, l}(c_1, c_2, \cdots c_k)$, we can define the new set of polynomials $p'_{i, l} = p_{x_i, l} - y_i$.
		This means that $\bigcup_{i, l} \{p'_{i, l}\}$  is a set of $ms$ polynomials that must be able to create at all $2^m$ different sign assignments that define $b$.
		However, we know from \cref{lem:consistent_sign} that the number of different sign assignments is at most $\mleft(\frac{8edms}{k}\mright)^k$ as long as $ms \geq k$, which we have assumed to be true.
		Therefore, $2^m \leq \mleft(\frac{8edms}{k}\mright)^k$.
		Taking the logarithm of both sides, \[m \leq k \log_2 \mleft(\frac{8edms}{k}\mright) = k \log_2 \mleft(8eds\mright) + k\log_2 \frac{m}{k}.\]
		We divide the situation into two cases based on which of these two logarithms is bigger: $8eds \geq \frac{m}{k}$ and $8eds < \frac{m}{k}$.
		The first case is easy to analyze, since if $8eds \geq \frac{m}{k}$, then we directly get $m \leq 2k \log_2 \mleft(8eds\mright)$ via substitution on the right-hand side.
		The other case leads to $m < 2k \log_2 \frac{m}{k}$, also via substitution on the right-hand side.
		Solving this with the Lambert $W$-function tells us that if $k > 0$ then $m < k e^{-W_{-1}\mleft(-\frac{\ln 2}{2}\mright)} = 4k$.
		Because $d \geq 1$ and $s \geq 1$ then $\log_2\mleft(8eds \mright) \geq \log_2 \mleft(4e \mright) > 2$, so $m < 4k < 2k \log_2 \mleft(8 eds\mright)$ for this other case as well.
	\end{proof}
	
	\subsection{Pseudo-Dimension for Quantum Circuits}
	
	\begin{proposition}[Stirling's approximation]\label{prop:stirling}
		\[\ln(n!) = n \ln n - n + \mathcal{O}(\ln n) = \mathcal{O}(n \ln n)\]
	\end{proposition}
	
	\begin{theorem}[Generalization of \citet{2020Caro} Theorem 3]\label{thm:Caro}
		The pseudo-dimension of quantum circuits on $n$ qudits comprised of at most $\gamma$ $2$-qudit gates with depth $\delta$ is upper bounded by $\mathcal{O}(d^4 \delta \gamma^2 \log \gamma)$.
	\end{theorem}
	\begin{proof}
		We want to apply \cref{thm:psuedo_dim_poly} to the concept class of quantum circuits. We know from \cref{cor:fixed_circuit_poly} that for fixed circuit structure with $\gamma$ gates and depth $\delta$ that it can be described as a polynomial with degree at most $2 \gamma$ in the $2 \gamma d^4$ real variables that describe the entries of the circuit. Furthermore, \cref{lemma:circuit_structures_num} tells us that there is at most $\frac{\gamma \delta^{\gamma - \delta}}{(\gamma - \delta)!}(n!)^\delta$ different circuit structures. We then apply \cref{thm:psuedo_dim_poly} with $k = 2 \gamma d^4$, $d = 2 \gamma$ and $s = \frac{\gamma! \delta^{\gamma - \delta}}{(\gamma - \delta)!}(n!)^\delta$.
		As a result we get that the pseudo-dimension is at most
		\begin{align}\label{eqn:pseudo_dim_apx}
			2k \log_2 \mleft(8 eds\mright) = 4 \gamma d^4 \log_2 \mleft(8e (2 \gamma) \frac{\gamma! \delta^{\gamma - \delta}}{(\gamma - \delta)!}(n!)^\delta\mright)
		\end{align}
		
		We will now focus on giving an upper bound for the logarithmic term by showing that
		\[
		\log_2 \mleft(8e (2 \gamma) \frac{\gamma! \delta^{\gamma - \delta}}{(\gamma - \delta)!}(n!)^\delta\mright) = \mathcal{O}(\delta \gamma \log \gamma).
		\]
		Splitting up the logarithm into sums and applying Stirling's approximation to each factorial, we arrive at
		\begin{align*}
			\log_2 \mleft(8e (2 \gamma) \frac{\gamma! \delta^{\gamma - \delta}}{(\gamma - \delta)!}(n!)^\delta\mright)
			=& \,\,4 + \log_2 e + \log_2 \gamma + (\gamma - \delta)\log_2 \gamma\\
			&+\mathcal{O}(\gamma \ln \gamma) + \delta \cdot \mathcal{O}(n \ln n) + \mathcal{O}((\delta - \gamma) \ln (\gamma - \delta)\\
			=& \,\,\mathcal{O}\mleft(\gamma \log \gamma + \delta n \log n + \delta \log \gamma \mright)
		\end{align*}
		
		Due to the definition of circuit structure, we know that $\delta \leq \gamma$.
		WLOG, we can also assume that every qubit has been acted upon by at least one gate (even if it's just the identity gate) such that $n \leq \gamma$.
		Together, we arrive that the logarithmic term is at most $\mathcal{O}(\delta \gamma \log \gamma)$.
		
		Since we have achieved our goal of upper-bounding the logarithmic term, from \cref{eqn:pseudo_dim_apx} we immediately get that the pseudo-dimension is at most $\mathcal{O}(d^4 \delta \gamma^2 \log \gamma)$.
	\end{proof}
	
	We now state the generalization of the main result \citet{2020Caro} to projective measurements of arbitrary rank.
	
	\begin{corollary}[Generalization of \citet{2020Caro} Corollary 3]\label{cor:Caro}
		Let $X$ be the set of quantum states on $n$ qudits, and let $Y$ be the set of all projectors on $n$ qudits. Let $U_*$ be a quantum circuit of 2-qudit quantum gates with size $\Gamma$ and depth $\Delta$. Let $\mathcal{D}$ be a probability distribution on $X \times Y$ unknown to the learner. Let
		\[S = \mleft\{\mleft(\mleft(x^{(i)}, y^{(i)}\mright), \trc\mleft[y^{(i)}U_* x^{(i)} U_*^\dagger\mright]\mright)\mright\}^m_{i=1}\]
		be corresponding training data where each $\mleft(x^{(i)}, y^{(i)}\mright)$ is drawn i.i.d according to $\mathcal{D}$. Let $\delta, \epsilon, \alpha, \beta \in (0, 1)$ where $\beta > \alpha$. Then, training data of size 
		\[m = \mathcal{O}\mleft(\frac{1}{\epsilon}\mleft(\Delta d^4\Gamma^2\log\Delta \log^2\mleft(\frac{\Delta d^4 \Gamma^2 \log(\Gamma)}{(\beta - \alpha)\epsilon}\mright) + \log\frac{1}{\delta}\mright)\mright)\]
		suffice to guarantee that, with probability $\geq 1- \delta$ with regard to choice of the training data, any quantum circuit $U$ of size $\Gamma$ and depth $\Delta$ that satisfies
		\begin{align*}
			\mleft|\trc\mleft[y^{(i)}U_* x^{(i)} U_*^\dagger\mright]  - \trc\mleft[y^{(i)}U x^{(i)} U^\dagger\mright] \mright| \leq \alpha \,\,\,\,\, \forall1 \leq i \leq m
		\end{align*}
		also satisfies
		\begin{align*}
			\ex_{(x, y) \sim \mathcal{D}}\mleft[\mleft(\trc\mleft[y^{(i)}U_* x^{(i)} U_*^\dagger\mright]  - \trc\mleft[y^{(i)}U x^{(i)} U^\dagger\mright]\mright)^2\mright] \leq (1 - \epsilon)\beta^2 + \epsilon
		\end{align*}
	\end{corollary}
	\begin{proof}
		We combine \cref{thm:Caro} with \cref{thm:fat_to_samp}, along with the fact that for all $\eta > 0$ the $\eta$-fat-shattering dimension is upper-bounded by pseudo-dimension.
	\end{proof}
	
	As shown by Theorem 4 of \citet{2020Caro}, a similar thing can be done with $n$-qudit quantum processes by simply changing the $d^4$ to $d^8$ in \cref{lem:fixed_circuit_poly,cor:fixed_circuit_poly}. This is because a quantum process is still a linear operation, but contains $d^8$ many entries now in parameter space. This propagates to \cref{thm:Caro,cor:Caro} by again replacing every appearance of $d^4$ with $d^8$.

	\section{Special Cases with Efficient Proper Learners}
	Despite the results given, there still exist situations where it is possible to efficiently proper learn Clifford circuits and CNOT circuits. We give brief proof sketches of some of them here.
	
	\subsection{CNOT Circuits for a Distribution with Support over a Single Measurement}\label{ssec:single_measurement}
	Let us try to learn CNOT circuits with regard to a distribution $\mathcal{D}$ such that there exists some pauli $P \in \{I, Z\}^{\otimes n}$ with $\pr_{(\rho, E) \sim \mathcal{D}}[E = \frac{I^{\otimes n}+P}{2}] = 1$. Because we are dealing with CNOT circuit the labels will always be $0$ and $1$ so by \cref{lemma:pauli_measure_trace} each label will tell us an affine subspace that $C^\dagger P C$ lies in. We can efficiently compute the intersection of this using Gaussian elimination with the generators to find a $P'$ that is consistent will all of the labels. From there, let $P, Q_2, Q_3, \dots, Q_n$ be a set of Paulis whose span is $\{I, Z\}^{\otimes n}$. Let $P', Q_2', Q_3', \dots, Q_n'$ also be a set of Paulis whose span is $\{I, Z\}^{\otimes n}$. It is clear that if we define our CNOT circuit such that $C^\dagger P C = P'$ and $C^\dagger Q_i C = Q_i'$ then we have a valid CNOT circuit. Efficiently finding such $\{Q_i\}$ and $\{Q_i'\}$ only takes $\mathcal{O}(n)$ expected samples of random Paulis in $\{I, Z\}^{\otimes n}$ and so can be done efficiently. Appealing to both \cref{prop:clifford_circuit_number} and \cref{thm:fat_to_samp} completes the proof.
	
	\subsection{Clifford circuits with the Uniform Distribution over Pauli Measurements}
	We note that if the distribution $\mathcal{D}$ entails the measurements being uniform over the Paulis then the problem is trivially easy to properly learn with $\epsilon < \frac{1}{\text{exp}(n)}$ and $\delta=0$ by just outputting a random Clifford circuit. This is because the probability that a random Pauli is in a given state's stabilizer group is $\frac{2^n}{4^n}$ so we will almost always see the label $\frac{1}{2}$ regardless of the hypothesis circuit we choose.
	
	\subsection{CNOT Circuits with the Uniform Distribution over \texorpdfstring{$\{I, Z\}^{\otimes n}$}{\{I, Z\}⊗n}}
	Let $\Theta$ and $Q$ be the matrix/vector forms of $\theta_{ij}$ and $q_j$ values from \cref{eq:clifford_gen}. We note that if we have enough independent samples that $\Theta \oplus Q$ is confined to a $\mathcal{O}(\log n)$ dimensional affine subspace then we can simply iterate through all possible $\Theta$ and $Q$ combinations to find one with a non-singular $\Theta$ in $\text{poly}(n)$ time. Let's say that we've restricted $\Theta \oplus Q$ to lie in a $d$ dimensional affine subspace. Let $M'$ be the true value of $\Theta$. and let $M \neq M'$ be another arbitrary matrix. Likewise let $x'$ be the true value of $Q$ and $x$ some arbitrary vector. Now let $Z^w \in \{I, Z\}^{\otimes n}$ be a Pauli selected uniformly at random, and $\rho = \ketbra{v}{v}$ a uniformly random computational basis state. The pair $M$ and $x$ will give the same label as the true label on the input $(\rho, \frac{I^{\otimes n} + Z^w}{2})$ if and only if
	\[
	\mleft(v^T(M + M') + x^T + (x')^T\mright)w = 0 \mod 2.
	\]
	Because $w$ is uniform random, as long as $v^T(M + M') + x^T + (x')^T$ is not the zero vector over $\mathbb{F}_2$ then this will only be $0$ at most half the time. Since at least one of $M \neq M'$ or $x \neq x'$ is true, then $(M+M')^T v = x + x'$ will only be true with probability at most $\frac{1}{2}$ as well. So the probability that any arbitrary $M$ and $x$ have different labels is at least $\frac{1}{4}$. Thus with $\mathcal{O}(n)$ expected samples uniformly drawn from arbitrary basis states and $Z^w$ we will have constrained our system to something we can bruteforce to find a full rank $\Theta$ and corresponding $q_j$ values that is consistent with all samples. From there we again apply both \cref{prop:clifford_circuit_number} and \cref{thm:fat_to_samp} to generalize with zero training error as long as the number of samples is also at least the parameter $m$ from the theorem statement.

	\subsection{Clifford Circuits for a Distribution with Support over a Single State}
	In the converse of an earlier situation, let us try to learn Clifford circuits with regard to a distribution $\mathcal{D}$ such that there exists some stabilizer state $\sigma$ with $\pr_{(\rho, E) \sim \mathcal{D}}[\rho = \sigma] = 1$. This situation effectively reduces to that of \citet{rocchetto2018stabiliser}. If we run that algorithm we will find a state $\sigma'$ that is consistent with all of the labels. Let $\{g_i\}$ be the generators of $\sigma$ and $\{g_i'\}$ the generators of $\sigma'$. If we let $C g_i C^\dagger = g_i'$ we define the first part of a Clifford circuit that maps $\sigma$ to $\sigma'$ as desired. We can then run the algorithm from \citet{https://doi.org/10.48550/arxiv.2008.06011} to fill in the remainder of the Clifford circuit. Appealing to both \cref{prop:clifford_circuit_number} and \cref{thm:fat_to_samp} once again completes the proof.

\end{document}